   \newcommand\SkipToFmtEnd{}%
   \newcommand\EndFmtInput{}%
   \long\def\SkipToFmtEnd#1\EndFmtInput{}%
\newcommand\ReadOnlyOnce[1]{\@ifundefined{#1}{\@namedef{#1}{}}\SkipToFmtEnd}
\DeclareFontFamily{OT1}{cmtex}{}
\DeclareFontShape{OT1}{cmtex}{m}{n}
  {<5><6><7><8>cmtex8
   <9>cmtex9
   <10><10.95><12><14.4><17.28><20.74><24.88>cmtex10}{}
\DeclareFontShape{OT1}{cmtex}{m}{it}
  {<-> ssub * cmtt/m/it}{}
\DeclareFontShape{OT1}{cmtt}{bx}{n}
  {<5><6><7><8>cmtt8
   <9>cmbtt9
   <10><10.95><12><14.4><17.28><20.74><24.88>cmbtt10}{}
\DeclareFontShape{OT1}{cmtex}{bx}{n}
  {<-> ssub * cmtt/bx/n}{}
\newcommand{\Varid}[1]{\mathit{#1}}
\newcommand{\anonymous}{\kern0.06em \vbox{\hrule\@width.5em}}
\newdimen\mathindent\mathindent\leftmargini}%
\def\resethooks{%
  \global\let\SaveRestoreHook\empty
  \global\let\ColumnHook\empty}
\newcommand*{\savecolumns}[1][default]%
  {\g@addto@macro\SaveRestoreHook{\savecolumns[#1]}}
\newcommand*{\restorecolumns}[1][default]%
  {\g@addto@macro\SaveRestoreHook{\restorecolumns[#1]}}
\newcommand*{\aligncolumn}[2]%
  {\g@addto@macro\ColumnHook{\column{#1}{#2}}}
\newcommand{\onelinecommentchars}{\quad-{}- }
\newcommand{\commentbeginchars}{\enskip\{-}
\newcommand{\commentendchars}{-\}\enskip}
\newcommand{\visiblecomments}{%
  \let\onelinecomment=\onelinecommentchars
  \let\commentbegin=\commentbeginchars
  \let\commentend=\commentendchars}
\newcommand{\invisiblecomments}{%
  \let\onelinecomment=\empty
  \let\commentbegin=\empty
  \let\commentend=\empty}
\newlength{\blanklineskip}
\newcommand{\hsindent}[1]{\quad}% default is fixed indentation
\let\hspre\empty
\let\hspost\empty
\newcommand{\hsnewpar}[1]%
  {{\parskip=0pt\parindent=0pt\par\vskip #1\noindent}}
\newcommand{\hscodestyle}{}
\newcommand{\sethscode}[1]%
  {\expandafter\let\expandafter\hscode\csname #1\endcsname
   \expandafter\let\expandafter\endhscode\csname end#1\endcsname}
   \let\hspre\(\let\hspost\)%
   \let\hspre\(\let\hspost\)%
\newcommand{\plainhs}{\sethscode{plainhscode}}
\def\codeframewidth{\arrayrulewidth}
   \let\endoflinesave=\\
   \framedhslinecorrect\endoflinesave{.5ex}\hline
\newcommand{\framedhslinecorrect}[2]%
  {#1[#2]}
\def\column##1##2{}%
   \newcommand\>[1][]{}\newcommand\<[1][]{}\newcommand\\[1][]{}%
   \def\fromto##1##2##3{##3}%
\let\orighscode=\hscode
   \let\origendhscode=\endhscode
   \def\endhscode{\def\hscode{\endgroup\def\@currenvir{hscode}\\}\begingroup}
\def\hscode{\endgroup\def\@currenvir{hscode}}}%
   \global\let\hscode=\orighscode
   \global\let\endhscode=\origendhscode}%
\newif\ifextended
\newif\ifextendedtitle
\newcommand{\Red}[1]{{\color{red} #1}}
\newcommand{\cut}[1]{}
\let\llncssubparagraph\subparagraph
\let\subparagraph\paragraph
\titlespacing{\section}{0pt}{1.5ex}{0.4ex}
\titlespacing{\subsection}{0pt}{1ex}{0.37ex}
\titlespacing{\subsubsection}{0pt}{0.47ex}{0.0ex}
\let\subparagraph\llncssubparagraph
\newcommand{\toref}[1]{}%\Red{XXXXX\ref{#1}}}
\newcommand{\appendixextfirst}{%
  \ifextended%
  Appendix~\ref{sec:appendix-extensions}%
  \else%
  the appendix of the extended version of this paper%
  \fi%
}
\newcommand{\appref}[1]{%
  \ifextended%
  Appendix~\ref{#1}%
  \else%
  the extended version of this paper%
  \fi%
}
\newcommand{\codelink}{\url{http://github.com/deian/espectro}}
\newcommand{\nop}[1]{}
\begin{document}

% label operation related:
\newcommand{\flows}{\sqsubseteq}
\newcommand{\lub}{\sqcup}
\newcommand{\glb}{\sqcap}

\newcommand{\ifc}[1]{\ensuremath{{\color{blue} #1}}}
\newcommand{\lcurr}{\ensuremath{\ifc{l_{\textrm{cur}}}}}

%%%%%%%%%%%%%%%%%%%%%%%%%%%%%%%%%%%%%%%%%%%%%%%%%%%%%%%%%%%%%%%%%%%
%% state, eval ctx, expression

%%%%%%%%%%%%%%%%%%%%%%%%%%%%%%%%%%%%%%%%%%%%%%%%%%%%%%%%%%%%%%%%%%%
%% IFC state, eval ctx, expression

%% why on earth can I not write "read iall il'", but have to define ial' instead to avoid parens?

%%%%%%%%%%%%%%%%%%%%%%%%%%%%%%%%%%%%%%%%%%%%%%%%%%%%%%%%%%%%%%%%%%%
%% TARGET state, eval ctx, expression
\newcommand{\tar}[1]{\ensuremath{\mathbf{\color{red} #1}}}

%formtt tty'
%formtt tty''
%formtt ttyi = tty"_"i
%formtt ttyi' = tty'"_"i
%formtt ttyi'' = tty''"_"i
%formtt tty1
%formtt tty2
%formtt tty3
%formtt tty1'
%formtt tty2'
%formtt tty3'
%formtt tty1''
%formtt tty2''
%formtt tty3''

%%%%%%%%%%%%%%%%%%%%%%%%%%%%%%%%%%%%%%%%%%%%%%%%%%%%%%%%%%%%%%%%%%%

%% language
\newcommand{\Coloneqq}{::=} % txfonts

%% misc
\newcommand{\dom}[1]{\ensuremath{{\textrm{dom}} #1}}
\newcommand{\fresh}[1]{\ensuremath{\textrm{fresh}(#1)}}

%%%%%%%%%%%%%%%%%%%%%%%%%%%%%%%%%%%%%%%%%%%%%%%%%%%%%%%%%%%%%%%%%%%
%% Combined state, eval ctx, expression

% Going to use S

%%%%%
%% LANGUAGES

%%%%%
%% From concrete

% ----------
%% \conferenceinfo{ICFP'14,} {September 1--3, 2013, Gothenburg, Sweden.}
%% \CopyrightYear{2014}
%% \copyrightdata{XXX-X-XXXX-XXXX-X/XX/XX}
% ----------

\title{
IFC Inside: Retrofitting Languages with Dynamic Information Flow Control
}
\ifextended
\subtitle{Extended Version}
\fi
%   Generic, Efficient, Coarse-grained Information Flow Control
%   IFC For Me
%   Who knew adding IFC could be us easy?
%   Language-polymorphic schemes of dynamic information flow control
%   Target language and IFC language: shake to mix
%   1. Target language, 2. IFC language, 3. ... 4. Profit!!!
%   Just add IFC

\author{
 Stefan Heule\inst{1} \and
 Deian Stefan\inst{1} \and
 Edward Z. Yang\inst{1} \and
 John C. Mitchell\inst{1} \and
 Alejandro Russo\inst{2}\protect\footnote{Work partially done while at Stanford.}
}
\institute{Stanford University \and Chalmers University
}

\maketitle

\begin{abstract}
Many important security problems in JavaScript, such as browser
extension security, untrusted JavaScript libraries and safe
integration of mutually distrustful websites (mash-ups), may be
effectively addressed using an efficient implementation of information
flow control (IFC).  Unfortunately existing fine-grained approaches to
JavaScript IFC require modifications to the language semantics and its engine, a
non-goal for browser applications.  In this work, we take the ideas of
coarse-grained dynamic IFC and provide the theoretical
foundation for a language-based approach that can be applied to any
programming language for which external effects can be controlled.  We
then apply this formalism to server- and client-side JavaScript,
show how it generalizes to the C programming language, and connect it
to the Haskell LIO system.  Our methodology offers
design principles for the construction of information flow control
systems when isolation can easily be achieved, as well as
compositional proofs for optimized concrete implementations of these
systems, by relating them to their isolated variants.
\end{abstract}
\section{Introduction}
\label{sec:intro}
%   \Red{DM had some very good comments on how to make this sound less
%     like a pitch for LIO. Outline for something more compelling (not
%     necessarily in this order):
%   \begin{itemize}
%     \item Motivate IFC
%     \item Coarse grained IFC is easy
%     \item Coarse grained IFC is sufficient (vs. fine grained); do by citation
%     \item We have the right API and it's language-agnostic; proof is also language independent
%     \item You can ``drop it in''
%   \end{itemize}
%   }

Modern web content is rendered using a potentially large number of
different components with differing provenance.
Disparate and untrusting components may arise from browser
extensions (whose JavaScript code runs alongside website
code), web applications (with possibly untrusted third-party
libraries), and mashups (which combine code and data from
websites that may not even be aware of each other's existence.)
While just-in-time combination of untrusting components
offers great flexibility, it also poses complex security challenges.
In particular, maintaining data privacy in the face of malicious
extensions, libraries, and mashup components has been difficult.
%and in general effectively impossible.

Information flow control (IFC) is a promising technique
that provides security
by tracking the flow of sensitive data through a system.
Untrusted code is confined so that it cannot exfiltrate data, except as
per an information flow policy.  Significant research has been devoted to
adding various forms of IFC to different kinds of programming languages
and systems.  In the context of the web, however, there is a strong
motivation to preserve JavaScript's semantics and avoid
JavaScript-engine modifications, while retrofitting it with dynamic information
flow control.

%In the alternative \textit{coarse-grained} approach to IFC,
The Operating Systems community has tackled this challenge (e.g.,
in~\cite{Zeldovich:2006}) by taking a \textit{coarse-grained} approach
to IFC: dividing an application into coarse computational units,
each with a single label dictating its security policy, and only
monitoring communication between them.
%Only communication between isolated computational units must be tracked.
%A recent system named COWL~\cite{swapi} adds IFC to the
%browser by identifying \emph{web workers} as an existing computational
%unit.  Communication between workers is mediated using
%IFC to ensure non-interference, preventing sensitive workers from
%influencing the computation of less sensitive workers.
This coarse-grained approach provides a number of advantages when
compared to the fine-grained approaches typically employed by language-based systems.
First, adding IFC does not require intrusive changes to an
existing programming language, thereby also allowing
the reuse of existing programs.  Second, it has a small
runtime overhead because checks need only
be performed at isolation boundaries
instead of (almost) every program instruction~(e.g.,~\cite{JSFlow}).
%, compared to fine-grained
%systems were the runtime overhead can be impractical~\cite{JSFlow}.
Finally, associating
a single security label with the entire computational unit simplifies
understanding and reasoning about the security guarantees of the
system, without reasoning about most of the
technical details of the semantics of the underlying programming language.

In this paper, we present a framework which brings coarse-grained IFC
ideas into a language-based setting:
%take coarse-grained labeling~\cite{Zeldovich:2006,
%lio} to its logical extreme:
an information flow control system should
be thought of as multiple instances of completely isolated language
runtimes or \emph{tasks}, with information flow control applied to
inter-task communication.  We describe a formal system in which an IFC
system can be designed once and then applied to any programming language
which has control over external effects (e.g., JavaScript or C with
access to hardware privilege separation).  We formalize this system
using an approach by
Matthews and Findler~\cite{Matthews:2007:OSM:1190216.1190220} for combining
operational semantics and prove non-interference guarantees
that are independent of the choice of a specific target language.

There are a number of points that distinguish this setting from
previous coarse-grained IFC systems.
First, even though the underlying semantic model involves communicating
tasks, these tasks can be coordinated together in ways that simulate
features of traditional languages.
In fact, simulating
features in this way is a useful \emph{design tool} for discovering
what variants of the features are permissible and which are not.
Second, although completely separate tasks are semantically easy to
reason about, real-world implementations often blur the lines between
tasks in the name of efficiency.
Characterizing what optimizations are permissible is subtle, since
removing transitions from the operational semantics of a language can
break non-interference.  We partially address this issue
by characterizing isomorphisms between the operational semantics of our
abstract language and a concrete implementation, showing that if this
relationship holds, then non-interference in the abstract specification
carries over to the concrete implementation.

Our contributions can be summarized as follows:
\vspace*{-0.3em}
\begin{itemize}
  \item We give formal semantics for a core coarse-grained
  dynamic information flow control language free of non-IFC constructs.
  We then show how a large class of target languages can be combined
  with this IFC language and prove that the result provides
  non-interference. (Sections~\ref{sec:retrofit} and \ref{sec:formal})
  \item We provide a proof technique to show the non-interference
  of a concrete semantics for a potentially optimized IFC language
  by means of an isomorphism and show a class of restrictions on
  the IFC language that preserves non-interference. (Section~\ref{sec:concrete})
  \item We have implemented an IFC system based on these semantics
  for Node.js, and we connect our formalism to another implementation
  based on this work for client-side JavaScript~\cite{swapi}.
  Furthermore, we outline an implementation for the C programming
  language and describe improvements to the Haskell LIO system that
  resulted from this framework.
  (Section~\ref{sec:real})
%  \item We show how to extend the Matthews-Findler method from just
%  languages that are not simply expressions evaluating to values,
%  but may be collections of threads executing nondeterministically.
%  \item We describe ways to enrich the core IFC language with
%  more advanced concepts such as labeled values, labeled references, or a
%  notion of clearance. (Section~\ref{sec:extensions})
\end{itemize}

\ifextended
\else
In the extended version of this paper we give all the relevant proofs and
extend our IFC language with additional features~\cite{extended}.
\fi

\cut{

(Something about the importance of IFC).

One barrier to the adoption of information flow control has been the
fact that it often incurs a large performance cost.  This usually stems
from the fact that most existing programming languages do not have
facilities for enforcing isolation.  Thus information flow control
checks must be applied at a very fine-grained level, e.g.\ all values in
the system must be labeled, resulting in large overhead for ordinary
operations.

Motivated by these problems, there has been increasing interest in
coarse-grained IFC systems, which trade-off precision for reduced
overhead.  These systems are characterized by floating label associated
with a thread of execution, so that access to labeled data taints the
entire thread, solving the problem of flow-sensitivity.  Typically,
these systems require strong isolation between threads, which previously
has been enforced by the type system. (LIO)  This has made this method
difficult to apply to languages which are unable to statically enforce
such strong isolation.

In this paper, we describe a simple but general methodology for using
isolation of \emph{execution contexts}, e.g.\ an instance of the
JavaScript engine, in order to add information flow control to an
existing language.  This is a very practical approach, as there are many
languages which have built-in capabilities for strong isolation by
forking an execution context (i.e. Web Workers).  To validate this
methodology, we describe its application to Haskell (LIO), to JavaScript
(Browbound), to C (HipStar) and to Java (Aeolis).  The essential idea is
to combine the formal models of the source language and a minimal IFC
language, using the embedding technique described in Matthews and
Findler '07.

Our contributions are as follows:

\begin{itemize}
    \item We define a minimal IFC language which describes the essence
        of coarse-grained information flow control.

    \item We describe how to combine this IFC language with an existing
        source language in the style of Matthews-Findler, with a twist:
        there are arbitrarily many copies of the source language, which
        do not share execution contexts.  Mediating between these languages
        requires serialization of some sort (we make this notion precise
        in our paper), making our combined system a distributed one.

    \item We carry out this methodology on four existing languages, and
        show its adequacy with respect to systems that were specialized
        for these languages.  In particular, our definitions are general
        enough to enable relaxed isolation when the source language is
        able to give stronger static guarantees.

    \item We show how to extend the Matthews-Findler method from just
        languages that are not simply expressions evaluating to values,
        but may be collections of threads executing nondeterministically.
\end{itemize}

The organization of the paper is as follows: first, we describe how to
add information flow control to JavaScript, showing the general outline of
our procedure. Next, we describe the procedure in generality.  Finally, we
apply the procedure to a number of systems.
}
\section{Retrofitting Languages with IFC}
\label{sec:retrofit}

Before moving on to the formal treatment of our system, we give
a brief primer of information flow control and describe some example programs
in our system, emphasizing the parallel between their implementation
in a multi-task setting, and the traditional, ``monolithic'' programming
language feature they simulate.

Information flow control systems operate by associating data with \emph{labels},
and specifying whether or not data tagged with one label \ensuremath{l_{1}} can flow
to another
label \ensuremath{l_{2}} (written as \ensuremath{l_{1}\flows{}l_{2}}).  These labels encode the desired
security policy (for example, confidential information should not flow to
a public channel), while the work of specifying the semantics of an information
flow language involves demonstrating that impermissible flows cannot happen,
a property called \emph{non-interference}~\cite{Goguen82}.
In our coarse-grained floating-label approach, labels are associated with tasks.
The task label---we refer to the label of the currently executing task as the
\emph{current label}---serves to protect everything in the task's scope;
all data in a task shares this common label.

As an example, here is a program which spawns a new isolated task,
and then sends it a mutable reference:
\begin{align*}
    & \ensuremath{\mathbf{let}\;\Varid{i}\mathrel{=}\tar{_{\textrm{TI}}\lfloor}\mathbf{sandbox}\;(\mathbf{blockingRecv}\ \Varid{x},\anonymous \;\mathbf{in}\;\ifc{^{\textrm{IT}}\lceil}\mathbin{!}\tar{_{\textrm{TI}}\lfloor}\Varid{x}\tar{\rfloor}\ifc{\rceil})\tar{\rfloor}}\\
    & \ensuremath{\mathbf{in}\;\tar{_{\textrm{TI}}\lfloor}\mathbf{send}\;\ifc{^{\textrm{IT}}\lceil}\Varid{i}\ifc{\rceil}\;l\;\ifc{^{\textrm{IT}}\lceil}\mathbf{ref}\;\mathbf{true}\ifc{\rceil}\tar{\rfloor}}
\end{align*}
For now, ignore the tags \ensuremath{\tar{_{\textrm{TI}}\lfloor}\cdot \tar{\rfloor}} and \ensuremath{\ifc{^{\textrm{IT}}\lceil}\cdot \ifc{\rceil}}: roughly, this code creates a new
\ensuremath{\mathbf{sandbox}}ed task with identifier $i$ which waits (\textbf{blockingRecv}, binding $x$ with the received message) for a
message, and then \textbf{send}s the task a mutable reference \ensuremath{(\mathbf{ref}\;\mathbf{true})} which it labels $l$.  If this operation actually shared the mutable cell between the two tasks, it
could be used to violate information flow control if the tasks had
differing labels.  At this point, the designer of an IFC system might
add label checks to mutable references, to check the labels of the
reader and writer. While this solves the leak, for languages like
JavaScript, where references are prevalently used, this also dooms the
performance of the system.

Our design principles suggest a different resolution: when these
constructs are treated as isolated tasks, each of which have their own heaps, it
is obviously the case that there is no sharing; in fact, the sandboxed task receives a dangling pointer.  Even if there is only one heap, if we enforce that references
not be shared, the two systems are morally equivalent. (We elaborate on
this formally in Section~\ref{sec:concrete}.)  Finally, this
semantics strongly suggests that one should restrict the types of
data which may be passed between tasks (for example, in JavaScript, one
might only allow JSON objects to be passed between tasks, rather than
general object structures).

Existing language-based, coarse-grained IFC
systems~\cite{Hritcu:2013:YIB:2497621.2498098,stefan:2012:arxiv-flexible}
allow a sub-computation to temporarily raise the floating-label; after
the sub-computation is done, the floating-label is restored to its
original label. When this occurs, the enforcement mechanism must
ensure that information does not leak to the (less confidential)
program continuation. The presence of exceptions adds yet more
intricacies.  For instance, exceptions should not automatically
propagate from a sub-computation directly into the program
continuation, and, if such exceptions are allowed to be inspected, the
floating-label at the point of the exception-raise must be tracked
alongside the exception
value~\cite{Hritcu:2013:YIB:2497621.2498098,stefan:2012:arxiv-flexible,Hedin:2012}.
In contrast, our system
provides the same flexibility and guarantees with no extra checks: tasks
are used to execute sub-computations, but the mere definition of
isolated tasks guarantees that (a) tasks only transfer data to the
program continuation by using inter-task communication means, and (b)
exceptions do cross tasks boundaries automatically.

\subsection{Preliminaries}

Our goal now is to
describe how to take a \textbf{{\color{red} target
language}} with a formal operational semantics and combine it with an
\textit{{\color{blue} information flow control language}}.  For example,
taking ECMAScript as the target language and combining it with our IFC
language should produce the formal semantics for the core part of COWL~\cite{swapi}.  In this
presentation, we use a simple, untyped lambda calculus with mutable
references and fixpoint in place of ECMAScript to demonstrate some the key
properties of the system (and, because the embedding does not care
about the target language features); we discuss the proper embedding
in more detail in Section~\ref{sec:real}.

\vspace{1pt}
\noindent
\textit{Notation}
We have typeset nonterminals of the target language using \textbf{{\color{red}
bold font}} while the nonterminals of the IFC language have been typeset
with \textit{{\color{blue} italic font}}.  Readers are encouraged to view
a color copy of this paper, where target language nonterminals are colored \textbf{{\color{red} red}}
and IFC language nonterminals are colored \textit{{\color{blue} blue}}.

\subsection{Target Language: Mini-ES}

In Fig.~\ref{fig:ml}, we give a simple, untyped lambda calculus with
mutable references and fixpoint, prepared for combination with an
information flow control language.  The presentation is mostly standard, and utilizes Felleisen-Hieb reduction
semantics~\cite{Felleisen:1992:RRS:136293.136297} to define the
operational semantics of the system.  One peculiarity is that our language
defines an evaluation context \ensuremath{\tar{E}}, but, the evaluation rules have been
expressed in terms of a different evaluation context \ensuremath{\mathcal{E}_{\tar{\Sigma}}};
Here, we follow the approach of Matthews and
Findler~\cite{Matthews:2007:OSM:1190216.1190220} in order to simplify combining
semantics of multiple languages.
To derive the usual operational semantics for this language, the evaluation
context merely needs to be defined as \ensuremath{\mathcal{E}_{\tar{\Sigma}}\left[\tar{e}\right]\triangleq{}\tar{\Sigma},\tar{E}\left[\tar{e}\right]}.
However, when we combine this language with an IFC language, we
reinterpret the meaning of this evaluation context.

In general, we require that a target language be expressed in terms
of some global machine state \ensuremath{\tar{\Sigma}}, some evaluation context \ensuremath{\tar{E}},
some expressions \ensuremath{\tar{e}}, some set of values \ensuremath{\tar{v}} and a \emph{deterministic}
reduction relation on full configurations $\ensuremath{\tar{\Sigma}} \times \ensuremath{\tar{E}} \times \ensuremath{\tar{e}}$.

\begin{figure}[t]
\begin{hscode}\SaveRestoreHook
\column{B}{@{}>{\hspre}l<{\hspost}@{}}%
\column{6}{@{}>{\hspre}l<{\hspost}@{}}%
\column{22}{@{}>{\hspre}l<{\hspost}@{}}%
\column{40}{@{}>{\hspre}l<{\hspost}@{}}%
\column{47}{@{}>{\hspre}l<{\hspost}@{}}%
\column{E}{@{}>{\hspre}l<{\hspost}@{}}%
\>[B]{}\tar{v}{}\<[6]%
\>[6]{}\Coloneqq\lambda \tar{x}.\tar{e}\mid \mathbf{true}\mid \mathbf{false}\mid \tar{a}{}\<[E]%
\\
\>[B]{}\tar{e}{}\<[6]%
\>[6]{}\Coloneqq\tar{v}\mid \tar{x}\mid \tar{e}\;\tar{e}\mid \mathbf{if}\;\tar{e}\;\mathbf{then}\;\tar{e}\;\mathbf{else}\;\tar{e}\mid \mathbf{ref}\;\tar{e}\mid \mathbin{!}\tar{e}\mid \tar{e}\mathbin{:=}\tar{e}\mid \mathbf{fix}\;\tar{e}{}\<[E]%
\\
\>[B]{}\tar{E}{}\<[6]%
\>[6]{}\Coloneqq\tar{[\cdot ]_T}\mid \tar{E}\;\tar{e}\mid \tar{v}\;\tar{E}\mid \mathbf{if}\;\tar{E}\;\mathbf{then}\;\tar{e}\;\mathbf{else}\;\tar{e}\mid \mathbf{ref}\;\tar{E}\mid \mathbin{!}\tar{E}\mid \tar{E}\mathbin{:=}\tar{e}\mid \tar{v}\mathbin{:=}\tar{E}\mid \mathbf{fix}\;\tar{E}{}\<[E]%
\\
\>[B]{}\tar{e}_{1};\tar{e}_{2}{}\<[22]%
\>[22]{}\triangleq{}(\lambda \tar{x}.\tar{e}_{2})\;\tar{e}_{1}\;{}\<[40]%
\>[40]{}\mathbf{where}\;{}\<[47]%
\>[47]{}\tar{x}\;\not\in\;\mathcal{FV}\;(\tar{e}_{2}){}\<[E]%
\\
\>[B]{}\mathbf{let}\;\tar{x}\mathrel{=}\tar{e}_{1}\;\mathbf{in}\;\tar{e}_{2}{}\<[22]%
\>[22]{}\triangleq{}(\lambda \tar{x}.\tar{e}_{2})\;\tar{e}_{1}{}\<[E]%
\ColumnHook
\end{hscode}\resethooks
\begin{mathpar}

\inferrule[T-app]
{ } {\ensuremath{\mathcal{E}_{\tar{\Sigma}}\left[(\lambda \Varid{x}.\tar{e})\;\tar{v}\right]\rightarrow\mathcal{E}_{\tar{\Sigma}}\left[\{\mskip1.5mu \tar{v}\mathbin{/}\Varid{x}\mskip1.5mu\}\;\tar{e}\right]}}

\and
\inferrule[T-ifTrue]
{ } {\ensuremath{\mathcal{E}_{\tar{\Sigma}}\left[\;\mathbf{if}\;\mathbf{true}\;\mathbf{then}\;\tar{e}_{1}\;\mathbf{else}\;\tar{e}_{2}\right]\rightarrow\mathcal{E}_{\tar{\Sigma}}\left[\tar{e}_{1}\right]}}

\end{mathpar}

\caption{\ensuremath{\Red{\lambda_{\text{ES}}}}: simple untyped lambda calculus extended with booleans,
mutable references and general recursion.  For space reasons we only show two
representative reduction rules;  full rules can be found in \appref{sec:app:semantics}.}
\label{fig:ml}
\end{figure}

\subsection{IFC Language}

As mentioned previously, most modern, dynamic information flow control
languages encode policy by associating a label with data.  Our
embedding is agnostic to the choice of labeling scheme; we only require
the labels to form a lattice~\cite{Denning:1976:LMS:360051.360056} with
the partial order $\sqsubseteq$, join \ensuremath{\lub}, and meet \ensuremath{\glb}.  In this
paper, we simply represent labels with the metavariable $l$, but do not
discuss them in more detail.
To enforce labels, the IFC monitor inspects the
current label  before performing a read or a write to decide whether the operation is permitted.
A task can only write to entities that are at least as sensitive.
Similarly, it can only read from entities that are less sensitive.
However, as in other floating-label systems, this current label can be raised
to allow the task to read from more sensitive entities at the cost of giving up
the ability to write to others. %certain entities.

In Fig.~\ref{fig:ifc}, we give the syntax and \emph{single-task}
evaluation rules for a minimal information flow control language.
Ordinarily, information flow control languages are defined by directly
stating a base language plus information flow control operators.  In
contrast, our language is purposely minimal: it does not have sequencing
operations, control flow, or other constructs.  However, it contains
support for the following core information flow control features:

\begin{itemize}
    \item First-class labels, with label values $l$ as well as operations for computing on
labels (\ensuremath{\flows{}}, \ensuremath{\lub} and \ensuremath{\glb}).
    \item Operations for inspecting (\textbf{getLabel}) and modifying
    (\textbf{setLabel}) the current label of the task (a task can only increase its label).
    \item Operations for non-blocking inter-task communication (\textbf{send}
    and \textbf{recv}), which interact with the global store of per-task
    message queues \ensuremath{\ifc{\Sigma}}.
%%     Labels are associated with message at the time of
%%     the send; the label on a message must always be at least as sensitive as that
%%     of the sending thread, since, conceptually, once a message is put on a
%%     message queue it is no longer protected by the current label, rather it is
%%     protected by the message label.
%%     Dually, when reading a message with |recv| the current label of the
%%     receiver must be at least as high as the message label.
    \item A sandboxing operation used to spawn new isolated tasks. In
    concurrent settings \ensuremath{\mathbf{sandbox}} corresponds to a fork-like primitive,
    whereas in a
    sequential setting, it more closely resembles
    computations which might temporarely raise the current
    floating-label~\cite{lio,Hritcu:2013:YIB:2497621.2498098}.

\end{itemize}

These operations are all defined with respect to an evaluation context
\ensuremath{\mathcal{E}_{\ifc{\Sigma}}^{\ifc{\Varid{i}},\ifc{l}}} that represents the context of the current task.
The evaluation context has three important pieces of
state: the global message queues \ensuremath{\ifc{\Sigma}}, the current label \ensuremath{\ifc{l}} and the task ID \ensuremath{\ifc{\Varid{i}}}.

We note that first-class labels, tasks (albeit named differently), and
operations for inspecting the current label are essentially universal to
all floating-label systems.
However, our choice of communication primitives is motivated by
those present in browsers, namely \texttt{postMessage}~\cite{webmessaging}.
Of course, other choices, such as blocking communication or labeled channels,
are possible. %(see Section~\ref{sec:extensions}).

These asynchronous communication primitives are worth further
discussion.  When a task is sending a message using \ensuremath{\mathbf{send}}, it also labels that
message with a label \ensuremath{\ifc{l}'} (which must be at or above the task's current label \ensuremath{\ifc{l}}).
Messages can only be received by a task if its current label is
at least as high as the label of the message.
Specifically, receiving a message using
$\ensuremath{\mathbf{recv}\ \ifc{\Varid{x}}_{1},\ifc{\Varid{x}}_{2}\;\mathbf{in}\;\ifc{\Varid{e}}_{1}\ \mathbf{else}\ \ifc{\Varid{e}}_{2}}$
binds the message and the sender's task identifier
to local variables \ensuremath{\ifc{\Varid{x}}_{1}} and \ensuremath{\ifc{\Varid{x}}_{2}}, respectively, and then executes \ensuremath{\ifc{\Varid{e}}_{1}}.
Otherwise, if there are no messages, that task continues its execution with \ensuremath{\ifc{\Varid{e}}_{2}}.
We denote the filtering of the message queue by \ensuremath{\ifc{\Theta}\preceq\ifc{l}},
which is defined as follows.
If \ensuremath{\ifc{\Theta}} is the empty list \ensuremath{\mathbf{nil}}, the
function is simply the identity function, i.e.,
\ensuremath{\mathbf{nil}\preceq\ifc{l}\mathrel{=}\mathbf{nil}}, and otherwise:
\[
\ensuremath{((\ifc{l}',\ifc{\Varid{i}},\ifc{\Varid{e}}),\ifc{\Theta})\preceq\ifc{l}} = \left\{
\begin{array}{l l}
\ensuremath{(\ifc{l}',\ifc{\Varid{i}},\ifc{\Varid{e}}),(\ifc{\Theta}\preceq\ifc{l})} & \quad \text{if \ensuremath{\ifc{l}'\;\flows{}\;\ifc{l}}}\\
\ensuremath{\ifc{\Theta}\preceq\ifc{l}} & \quad \text{otherwise}
\end{array} \right.
\]
This ensures that tasks cannot receive messages that are more sensitive
than their current label would allow.

\begin{figure}
\begin{hscode}\SaveRestoreHook
\column{B}{@{}>{\hspre}l<{\hspost}@{}}%
\column{6}{@{}>{\hspre}c<{\hspost}@{}}%
\column{6E}{@{}l@{}}%
\column{8}{@{}>{\hspre}l<{\hspost}@{}}%
\column{11}{@{}>{\hspre}l<{\hspost}@{}}%
\column{24}{@{}>{\hspre}l<{\hspost}@{}}%
\column{38}{@{}>{\hspre}l<{\hspost}@{}}%
\column{43}{@{}>{\hspre}c<{\hspost}@{}}%
\column{43E}{@{}l@{}}%
\column{48}{@{}>{\hspre}l<{\hspost}@{}}%
\column{58}{@{}>{\hspre}l<{\hspost}@{}}%
\column{63}{@{}>{\hspre}l<{\hspost}@{}}%
\column{71}{@{}>{\hspre}l<{\hspost}@{}}%
\column{76}{@{}>{\hspre}c<{\hspost}@{}}%
\column{76E}{@{}l@{}}%
\column{81}{@{}>{\hspre}l<{\hspost}@{}}%
\column{E}{@{}>{\hspre}l<{\hspost}@{}}%
\>[B]{}\ifc{\Varid{v}}{}\<[6]%
\>[6]{}\Coloneqq{}\<[6E]%
\>[11]{}\ifc{\Varid{i}}\mid \ifc{l}\mid \mathbf{true}\mid \mathbf{false}\mid \langle\rangle\;\qquad{}\;\qquad{}\;\otimes{}\<[58]%
\>[58]{}\Coloneqq{}\<[63]%
\>[63]{}\flows{}\mid \lub\mid \glb{}\<[E]%
\\
\>[B]{}\ifc{\Varid{e}}{}\<[6]%
\>[6]{}\Coloneqq{}\<[6E]%
\>[11]{}\ifc{\Varid{v}}\mid \ifc{\Varid{x}}\mid \ifc{\Varid{e}}\;\otimes\;\ifc{\Varid{e}}\mid \mathbf{getLabel}\mid \mathbf{setLabel}\;\ifc{\Varid{e}}\mid \mathbf{taskId}\mid {}\<[71]%
\>[71]{}\mathbf{sandbox}\;\ifc{\Varid{e}}{}\<[E]%
\\
\>[6]{}\hsindent{2}{}\<[8]%
\>[8]{}\mid \mathbf{send}\;\ifc{\Varid{e}}\;\ifc{\Varid{e}}\;\ifc{\Varid{e}}\mid \mathbf{recv}\ \ifc{\Varid{x}},\ifc{\Varid{x}}\;\mathbf{in}\;\ifc{\Varid{e}}\ \mathbf{else}\ \ifc{\Varid{e}}{}\<[E]%
\\
\>[B]{}\ifc{E}{}\<[6]%
\>[6]{}\Coloneqq{}\<[6E]%
\>[11]{}\ifc{[\cdot ]_I}\mid \ifc{E}\;\otimes\;\ifc{\Varid{e}}\mid \ifc{\Varid{v}}\;\otimes\;\ifc{E}\mid \mathbf{setLabel}\;\ifc{E}\mid {}\<[58]%
\>[58]{}\mathbf{send}\;\ifc{E}\;\ifc{\Varid{e}}\;\ifc{\Varid{e}}\mid \mathbf{send}\;\ifc{\Varid{v}}\;\ifc{E}\;\ifc{\Varid{e}}\mid \mathbf{send}\;\ifc{\Varid{v}}\;\ifc{\Varid{v}}\;\ifc{E}{}\<[E]%
\\
\>[B]{}\ifc{\theta}{}\<[6]%
\>[6]{}\Coloneqq{}\<[6E]%
\>[11]{}(\ifc{l},\ifc{\Varid{i}}\;\ifc{\Varid{e}})\;{}\<[24]%
\>[24]{}\qquad{}\;\qquad{}\;{}\<[38]%
\>[38]{}\ifc{\Theta}{}\<[43]%
\>[43]{}\Coloneqq{}\<[43E]%
\>[48]{}\mathbf{nil}\mid \ifc{\theta},\ifc{\Theta}\;{}\<[63]%
\>[63]{}\qquad{}\;{}\<[71]%
\>[71]{}\ifc{\Sigma}{}\<[76]%
\>[76]{}\Coloneqq{}\<[76E]%
\>[81]{}\emptyset\mid \ifc{\Sigma}\left[\ifc{\Varid{i}}\mapsto{}\ifc{\Theta}\right]{}\<[E]%
\ColumnHook
\end{hscode}\resethooks
\begin{mathpar}

\inferrule[I-getTaskId]
{ }
{\ensuremath{\mathcal{E}_{\ifc{\Sigma}}^{\ifc{\Varid{i}},\ifc{l}}\left[\mathbf{taskId}\right]\rightarrow\mathcal{E}_{\ifc{\Sigma}}^{\ifc{\Varid{i}},\ifc{l}}\left[\ifc{\Varid{i}}\right]}}
\hspace*{-0.5cm}

\and
\inferrule[I-getLabel]
{ }
{\ensuremath{\mathcal{E}_{\ifc{\Sigma}}^{\ifc{\Varid{i}},\ifc{l}}\left[\mathbf{getLabel}\right]\rightarrow\mathcal{E}_{\ifc{\Sigma}}^{\ifc{\Varid{i}},\ifc{l}}\left[\ifc{l}\right]}}
\hspace*{-0.5cm}

\and
\inferrule[I-labelOp]
{ \ensuremath{\llbracket \ifc{l}_{1}\;\otimes\;\ifc{l}_{2}\rrbracket\mathrel{=}\ifc{\Varid{v}}}}
{ \ensuremath{\mathcal{E}_{\ifc{\Sigma}}^{\ifc{\Varid{i}},\ifc{l}}\left[\ifc{l}_{1}\;\otimes\;\ifc{l}_{2}\right]\rightarrow\mathcal{E}_{\ifc{\Sigma}}^{\ifc{\Varid{i}},\ifc{l}}\left[\ifc{\Varid{v}}\right]} }

\and
\inferrule[I-send]
{
\ensuremath{\ifc{l}\;\flows{}\;\ifc{l}'}\\
\ensuremath{\ifc{\Sigma}(\ifc{\Varid{i}}')\mathrel{=}\ifc{\Theta}}\\
\ensuremath{\ifc{\Sigma}'\mathrel{=}\ifc{\Sigma}\left[\ifc{\Varid{i}}'\mapsto{}(\ifc{l}',\ifc{\Varid{i}},\ifc{\Varid{v}}),\ifc{\Theta}\right]}
}
{\ensuremath{\mathcal{E}_{\ifc{\Sigma}}^{\ifc{\Varid{i}},\ifc{l}}\left[\mathbf{send}\;\ifc{\Varid{i}}'\;\ifc{l}'\;\ifc{\Varid{v}}\right]\rightarrow\mathcal{E}_{\ifc{\Sigma}'}^{\ifc{\Varid{i}},\ifc{l}}\left[\langle\rangle\right]}}

\and
\inferrule[I-recv]
{
\ensuremath{(\ifc{\Sigma}(\ifc{\Varid{i}})\preceq\ifc{l})\mathrel{=}\ifc{\theta}_{1},\mathbin{...},\ifc{\theta}_\Varid{k},(\ifc{l}',\ifc{\Varid{i}}',\ifc{\Varid{v}})}\\
\ensuremath{\ifc{\Sigma}'\mathrel{=}\ifc{\Sigma}\left[\ifc{\Varid{i}}\mapsto{}(\ifc{\theta}_{1},\mathbin{...},\ifc{\theta}_\Varid{k})\right]}\\
}
{\ensuremath{\mathcal{E}_{\ifc{\Sigma}}^{\ifc{\Varid{i}},\ifc{l}}\left[\mathbf{recv}\ \ifc{\Varid{x}}_{1},\ifc{\Varid{x}}_{2}\;\mathbf{in}\;\ifc{\Varid{e}}_{1}\ \mathbf{else}\ \ifc{\Varid{e}}_{2}\right]\rightarrow\mathcal{E}_{\ifc{\Sigma}'}^{\ifc{\Varid{i}},\ifc{l}}\left[\{\mskip1.5mu \ifc{\Varid{v}}\mathbin{/}\ifc{\Varid{x}}_{1},\ifc{\Varid{i}}'\mathbin{/}\ifc{\Varid{x}}_{2}\mskip1.5mu\}\;\ifc{\Varid{e}}_{1}\right]}}

\and
\inferrule[I-noRecv]
{
\ensuremath{\ifc{\Sigma}(\ifc{\Varid{i}})\preceq\ifc{l}\mathrel{=}\mathbf{nil}}\\
\ensuremath{\ifc{\Sigma}'\mathrel{=}\ifc{\Sigma}\left[\ifc{\Varid{i}}\mapsto{}\mathbf{nil}\right]}
}
{\ensuremath{\mathcal{E}_{\ifc{\Sigma}}^{\ifc{\Varid{i}},\ifc{l}}\left[\mathbf{recv}\ \ifc{\Varid{x}}_{1},\ifc{\Varid{x}}_{2}\;\mathbf{in}\;\ifc{\Varid{e}}_{1}\ \mathbf{else}\ \ifc{\Varid{e}}_{2}\right]\rightarrow\mathcal{E}_{\ifc{\Sigma}'}^{\ifc{\Varid{i}},\ifc{l}}\left[\ifc{\Varid{e}}_{2}\right]}}

\and
\inferrule[I-setLabel]
{ \ensuremath{\ifc{l}\;\flows{}\;\ifc{l}'} }
{\ensuremath{\mathcal{E}_{\ifc{\Sigma}}^{\ifc{\Varid{i}},\ifc{l}}\left[\mathbf{setLabel}\;\ifc{l}'\right]\rightarrow\mathcal{E}_{\ifc{\Sigma}}^{\ifc{\Varid{i}},\ifc{l}'}\left[\langle\rangle\right]}}

\end{mathpar}
\caption{IFC language with all single-task operations.}
\label{fig:ifc}
\end{figure}

\subsection{The Embedding}
Fig.~\ref{fig:embedding}
provides all of the rules responsible for actually carrying out the embedding of the IFC language within the target language.
The most important feature of this embedding is that every task maintains its own
copy of the target language global state and evaluation context, thus
enforcing isolation between various tasks.  In more detail:

\begin{figure}
\begin{tabular}{lll}
\begin{minipage}{.22\textwidth}
\begin{hscode}\SaveRestoreHook
\column{B}{@{}>{\hspre}l<{\hspost}@{}}%
\column{5}{@{}>{\hspre}l<{\hspost}@{}}%
\column{E}{@{}>{\hspre}l<{\hspost}@{}}%
\>[B]{}\ifc{\Varid{v}}{}\<[5]%
\>[5]{}\Coloneqq\cdots\mid \ifc{^{\textrm{IT}}\lceil}\tar{v}\ifc{\rceil}{}\<[E]%
\\
\>[B]{}\ifc{\Varid{e}}{}\<[5]%
\>[5]{}\Coloneqq\cdots\mid \ifc{^{\textrm{IT}}\lceil}\tar{e}\ifc{\rceil}{}\<[E]%
\\
\>[B]{}\ifc{E}{}\<[5]%
\>[5]{}\Coloneqq\cdots\mid \ifc{^{\textrm{IT}}\lceil}\tar{E}\ifc{\rceil}{}\<[E]%
\ColumnHook
\end{hscode}\resethooks
\end{minipage} &
\begin{minipage}{.22\textwidth}
\begin{hscode}\SaveRestoreHook
\column{B}{@{}>{\hspre}l<{\hspost}@{}}%
\column{5}{@{}>{\hspre}l<{\hspost}@{}}%
\column{E}{@{}>{\hspre}l<{\hspost}@{}}%
\>[B]{}\tar{v}{}\<[5]%
\>[5]{}\Coloneqq\cdots\mid \tar{_{\textrm{TI}}\lfloor}\ifc{\Varid{v}}\tar{\rfloor}{}\<[E]%
\\
\>[B]{}\tar{e}{}\<[5]%
\>[5]{}\Coloneqq\cdots\mid \tar{_{\textrm{TI}}\lfloor}\ifc{\Varid{e}}\tar{\rfloor}{}\<[E]%
\\
\>[B]{}\tar{E}{}\<[5]%
\>[5]{}\Coloneqq\cdots\mid \tar{_{\textrm{TI}}\lfloor}\ifc{E}\tar{\rfloor}{}\<[E]%
\ColumnHook
\end{hscode}\resethooks
\end{minipage} &
\begin{minipage}{.42\textwidth}
\begin{hscode}\SaveRestoreHook
\column{B}{@{}>{\hspre}l<{\hspost}@{}}%
\column{E}{@{}>{\hspre}l<{\hspost}@{}}%
\>[B]{}\mathcal{E}_{\tar{\Sigma}}\left[\tar{e}\right]\triangleq{}\ifc{\Sigma};\langle \tar{\Sigma}, \ifc{E}\ifc{[}\tar{e}\ifc{]}_\tar{T}\rangle^{\ifc{\Varid{i}}}_{\ifc{l}},\ldots{}\<[E]%
\\
\>[B]{}\mathcal{E}_{\ifc{\Sigma}}^{\ifc{\Varid{i}},\ifc{l}}\left[\ifc{\Varid{e}}\right]\triangleq{}\ifc{\Sigma};\langle \tar{\Sigma}, \ifc{E}\ifc{[}\ifc{\Varid{e}}\ifc{]}_\ifc{I}\rangle^{\ifc{\Varid{i}}}_{\ifc{l}},\ldots{}\<[E]%
\\
\>[B]{}\mathcal{E}\;[\mskip1.5mu \Varid{e}\mskip1.5mu]\rightarrow\ifc{\Sigma};\ifc{\Varid{t}},\ldots\triangleq{}\mathcal{E}\;[\mskip1.5mu \Varid{e}\mskip1.5mu]\overset{\alpha}{\hookrightarrow}\ifc{\Sigma};\alpha_{{\tiny\mathrm{\text{step}}}}(\ifc{\Varid{t}},\ldots){}\<[E]%
\ColumnHook
\end{hscode}\resethooks
\end{minipage}
\end{tabular}
\begin{mathpar}
\inferrule[I-sandbox]
{
\ensuremath{\ifc{\Sigma}'\mathrel{=}\ifc{\Sigma}\left[\ifc{\Varid{i}}'\mapsto{}\mathbf{nil}\right]}\\
\ensuremath{\tar{\Sigma}'\mathrel{=}\kappa\;(\tar{\Sigma})}\\
\ensuremath{\ifc{\Varid{t}}_{1}\mathrel{=}\langle \tar{\Sigma}, \ifc{E}\ifc{[}\ifc{\Varid{i}}'\ifc{]}\rangle^{\ifc{\Varid{i}}}_{\ifc{l}}}\\
\ensuremath{\ifc{\Varid{t}}_\textrm{new}\mathrel{=}\langle \tar{\Sigma}', \ifc{\Varid{e}}\rangle^{\ifc{\Varid{i}}'}_{\ifc{l}}}\\
\ensuremath{\textrm{fresh}(\ifc{\Varid{i}}')}
}
{\ensuremath{\ifc{\Sigma};\langle \tar{\Sigma}, \ifc{E}\ifc{[}\mathbf{sandbox}\;\ifc{\Varid{e}}\ifc{]}_\ifc{I}\rangle^{\ifc{\Varid{i}}}_{\ifc{l}},\ldots\overset{\alpha}{\hookrightarrow}\ifc{\Sigma}';\alpha_{{\tiny\mathrm{\text{sandbox}}}}(\ifc{\Varid{t}}_{1},\ldots,\ifc{\Varid{t}}_\textrm{new})}}

\and
\inferrule[I-done]
{\ensuremath{}}
{\ensuremath{\ifc{\Sigma};\langle \tar{\Sigma}, \ifc{\Varid{v}}\rangle^{\ifc{\Varid{i}}}_{\ifc{l}},\ldots\overset{\alpha}{\hookrightarrow}\ifc{\Sigma};\alpha_{{\tiny\mathrm{\text{done}}}}(\langle \tar{\Sigma}, \ifc{\Varid{v}}\rangle^{\ifc{\Varid{i}}}_{\ifc{l}},\ldots)}}

\and
\inferrule[I-noStep]
{\ensuremath{\ifc{\Sigma};\ifc{\Varid{t}},\ldots\not\overset{\alpha}{\hookrightarrow}}}
{\ensuremath{\ifc{\Sigma};\ifc{\Varid{t}},\ldots\overset{\alpha}{\hookrightarrow}\ifc{\Sigma};\alpha_{{\tiny\mathrm{\text{noStep}}}}(\ifc{\Varid{t}},\ldots)}}

\and
\inferrule[I-border]
{ }
{\ensuremath{\mathcal{E}_{\ifc{\Sigma}}^{\ifc{\Varid{i}},\ifc{l}}\left[\ifc{^{\textrm{IT}}\lceil}\tar{_{\textrm{TI}}\lfloor}\ifc{\Varid{e}}\tar{\rfloor}\ifc{\rceil}\right]\rightarrow\mathcal{E}_{\ifc{\Sigma}}^{\ifc{\Varid{i}},\ifc{l}}\left[\ifc{\Varid{e}}\right]}}

\and
\inferrule[T-border]
{ } {\ensuremath{\mathcal{E}_{\tar{\Sigma}}\left[\tar{_{\textrm{TI}}\lfloor}\ifc{^{\textrm{IT}}\lceil}\tar{e}\ifc{\rceil}\tar{\rfloor}\right]\rightarrow\mathcal{E}_{\tar{\Sigma}}\left[\tar{e}\right]}}
\end{mathpar}
\caption{The embedding \ensuremath{L_\text{IFC}(\alpha,\Red{\lambda}\;\!\!)}, where
\ensuremath{\Red{\lambda}\;\!\!\mathrel{=}(\tar{\Sigma},\tar{E},\tar{e},\tar{v},\rightarrow)}}
\label{fig:embedding}
\end{figure}

\begin{itemize}
    \item We extend the values, expressions and evaluation contexts of
      both languages to allow for terms in one language to
      be embedded in the other, as
      in~\cite{Matthews:2007:OSM:1190216.1190220}.
In the target language, an IFC expression appears as \ensuremath{\tar{_{\textrm{TI}}\lfloor}\ifc{\Varid{e}}\tar{\rfloor}} (``\Red{T}arget-outside,
IFC-inside''); in the IFC language, a target language expression appears as \ensuremath{\ifc{^{\textrm{IT}}\lceil}\tar{e}\ifc{\rceil}} ( ``{\color{blue}{I}}FC-outside, target-inside'').
    \item We reinterpret \ensuremath{\mathcal{E}} to be evaluation contexts on task lists, providing definitions for \ensuremath{\mathcal{E}_{\tar{\Sigma}}} and \ensuremath{\mathcal{E}_{\ifc{\Sigma}}^{\ifc{\Varid{i}},\ifc{l}}}.  These rules only operate on the first task in the task list, which by convention is the only task executing.
    \item We reinterpret \ensuremath{\rightarrow}, an operation on a single task, in terms of \ensuremath{\hookrightarrow}, operation on task lists.  The correspondence is simple: a task executes a step and then is rescheduled in the task list according to schedule policy \ensuremath{\alpha}.
    Fig.~\ref{fig:scheduler} defines two concrete schedulers.
    \item Finally, we define some rules for scheduling, handling sandboxing tasks (which interact with the state of the target language),
    and intermediating between the borders of the two languages.
\end{itemize}

\begin{figure}[t]
\begin{tabular}{ll}
\begin{minipage}{.45\textwidth}
\begin{hscode}\SaveRestoreHook
\column{B}{@{}>{\hspre}l<{\hspost}@{}}%
\column{3}{@{}>{\hspre}l<{\hspost}@{}}%
\column{40}{@{}>{\hspre}l<{\hspost}@{}}%
\column{E}{@{}>{\hspre}l<{\hspost}@{}}%
\>[3]{}\textsc{RR}_{{\tiny\mathrm{\text{step}}}}(\ifc{\Varid{t}}_{1},\ifc{\Varid{t}}_{2},\ldots){}\<[40]%
\>[40]{}\mathrel{=}\ifc{\Varid{t}}_{2},\ldots,\ifc{\Varid{t}}_{1}{}\<[E]%
\\
\>[3]{}\textsc{RR}_{{\tiny\mathrm{\text{done}}}}(\ifc{\Varid{t}}_{1},\ifc{\Varid{t}}_{2},\ldots){}\<[40]%
\>[40]{}\mathrel{=}\ifc{\Varid{t}}_{2},\ldots{}\<[E]%
\\
\>[3]{}\textsc{RR}_{{\tiny\mathrm{\text{noStep}}}}(\ifc{\Varid{t}}_{1},\ifc{\Varid{t}}_{2},\ldots){}\<[40]%
\>[40]{}\mathrel{=}\ifc{\Varid{t}}_{2},\ldots{}\<[E]%
\\
\>[3]{}\textsc{RR}_{{\tiny\mathrm{\text{sandbox}}}}(\ifc{\Varid{t}}_{1},\ifc{\Varid{t}}_{2},\ldots){}\<[40]%
\>[40]{}\mathrel{=}\ifc{\Varid{t}}_{2},\ldots,\ifc{\Varid{t}}_{1}{}\<[E]%
\ColumnHook
\end{hscode}\resethooks
\end{minipage} &
\begin{minipage}{.45\textwidth}
\begin{hscode}\SaveRestoreHook
\column{B}{@{}>{\hspre}l<{\hspost}@{}}%
\column{3}{@{}>{\hspre}l<{\hspost}@{}}%
\column{35}{@{}>{\hspre}l<{\hspost}@{}}%
\column{E}{@{}>{\hspre}l<{\hspost}@{}}%
\>[3]{}\textsc{Seq}_{{\tiny\mathrm{\text{step}}}}(\ifc{\Varid{t}}_{1},\ifc{\Varid{t}}_{2},\ldots){}\<[35]%
\>[35]{}\mathrel{=}\ifc{\Varid{t}}_{1},\ifc{\Varid{t}}_{2},\ldots{}\<[E]%
\\
\>[3]{}\textsc{Seq}_{{\tiny\mathrm{\text{noStep}}}}(\ifc{\Varid{t}}_{1},\ifc{\Varid{t}}_{2},\ldots){}\<[35]%
\>[35]{}\mathrel{=}\ifc{\Varid{t}}_{1},\ifc{\Varid{t}}_{2},\ldots{}\<[E]%
\\
\>[3]{}\textsc{Seq}_{{\tiny\mathrm{\text{done}}}}(\ifc{\Varid{t}}){}\<[35]%
\>[35]{}\mathrel{=}\ifc{\Varid{t}}{}\<[E]%
\\
\>[3]{}\textsc{Seq}_{{\tiny\mathrm{\text{done}}}}(\ifc{\Varid{t}}_{1},\ifc{\Varid{t}}_{2},\ldots){}\<[35]%
\>[35]{}\mathrel{=}\ifc{\Varid{t}}_{2},\ldots{}\<[E]%
\\
\>[3]{}\textsc{Seq}_{{\tiny\mathrm{\text{sandbox}}}}(\ifc{\Varid{t}}_{1},\ifc{\Varid{t}}_{2},\ldots,\ifc{\Varid{t}}_\Varid{n}){}\<[35]%
\>[35]{}\mathrel{=}\ifc{\Varid{t}}_\Varid{n},\ifc{\Varid{t}}_{1},\ifc{\Varid{t}}_{2},\ldots{}\<[E]%
\ColumnHook
\end{hscode}\resethooks
\end{minipage}
\end{tabular}
\vspace*{-0.5cm}
\caption{Scheduling policies (concurrent round robin on the left, sequential on the right).}
\label{fig:scheduler}
\end{figure}

The \textsc{I-sandbox} rule is used to create a new isolated task that
executes separately from the existing tasks (and can be communicated
with via \ensuremath{\mathbf{send}} and \ensuremath{\mathbf{recv}}).  When the new task is created, there
is the question of what the target language state of the new task should
be.  Our rule is stated generically in terms of a function \ensuremath{\kappa}.
Conservatively, \ensuremath{\kappa} may be simply thought of as the identity
function, in which
case the semantics of \ensuremath{\mathbf{sandbox}} are such that the state of the target language is \emph{cloned}
when sandboxing occurs.  However, this is not necessary: it is also valid for \ensuremath{\kappa}
to remove entries from the state.  In Section~\ref{sec:concrete}, we give a more detailed
discussion of the implications of the choice of \ensuremath{\kappa}, but all our
security claims will hold regardless of the choice of \ensuremath{\kappa}.

The rule \textsc{I-noStep} says something about configurations for which
it is not possible to take a transition.  The notation
$\ensuremath{\ifc{c}}\not\overset{\alpha}{\hookrightarrow}$ in the premise
is meant to be understood as
follows:  If the configuration \ensuremath{\ifc{c}} cannot take a step by any rule other
than \textsc{I-noStep}, then \textsc{I-noStep} applies and the
stuck task gets removed.

Rules \textsc{I-done} and \textsc{I-noStep} define the behavior of the system
when the current thread has reduced to a value, or gotten stuck, respectively.
While these definitions simply rely on the underlying scheduling policy
\ensuremath{\alpha} to modify the task list, as we describe in Sections~\ref{sec:formal}
and~\ref{sec:extensions}, these rules (notably, \textsc{I-noStep}) are crucial
to proving our security guarantees.
For instance, it is unsafe for the whole system to get stuck if a particular
task gets stuck, since a sensitive thread may then leverage this to leak
information through the termination channel.
Instead, as our example round-robin (\ensuremath{\textsc{RR}}) scheduler shows,
such tasks should simply be removed from the task list.
Many language runtime or Operating System schedulers implement such
schedulers.
Moreover, techniques such as instruction-based
scheduling~\cite{stefan:2013:eliminating, buiras:2013:a-library} can
be further applied close the gap between specified semantics and
implementation.

As in~\cite{Matthews:2007:OSM:1190216.1190220}, rules \textsc{T-border} and
\textsc{I-border} define the syntactic boundaries between the IFC and target
languages.
Intuitively, the boundaries respectively correspond to an upcall into and
downcall from the IFC runtime.
As an example, taking \ensuremath{\Red{\lambda_{\text{ES}}}} as the target language, we can now define a
blocking receive (inefficiently) in terms of the asynchronous \ensuremath{\mathbf{recv}} as series
of cross-language calls:
\begin{hscode}\SaveRestoreHook
\column{B}{@{}>{\hspre}l<{\hspost}@{}}%
\column{E}{@{}>{\hspre}l<{\hspost}@{}}%
\>[B]{}\mathbf{blockingRecv}\ \ifc{\Varid{x}}_{1},\ifc{\Varid{x}}_{2}\;\mathbf{in}\;\ifc{\Varid{e}}\triangleq{}\ifc{^{\textrm{IT}}\lceil}\mathbf{fix}\;(\lambda \Varid{k}.\tar{_{\textrm{TI}}\lfloor}\mathbf{recv}\ \ifc{\Varid{x}}_{1},\ifc{\Varid{x}}_{2}\;\mathbf{in}\;\ifc{\Varid{e}}\ \mathbf{else}\ \ifc{^{\textrm{IT}}\lceil}\Varid{k}\ifc{\rceil}\tar{\rfloor})\ifc{\rceil}{}\<[E]%
\ColumnHook
\end{hscode}\resethooks

%We note that our embedding corresponds to the lump embedding
%of~\cite{Matthews:2007:OSM:1190216.1190220} where terms in one language are
%treated opaquely in the other.
%%
%Indeed since our embedding is defined for an arbitrary language |targetLang|,
%we cannot do better.
%%
%However, when we have more information about the target language, it is
%possible to add rules for a natural embedding of expressions that occur in both
%languages.  In our specific example with |targetLangML|, we can choose to
%preserve booleans across language boundaries as follows:
%\begin{mathpar}
%\inferrule
%\and
%\inferrule
%\end{mathpar}

For any target language \ensuremath{\Red{\lambda}} and scheduling policy \ensuremath{\alpha}, this
embedding defines an IFC language, which we will
refer to as \ensuremath{L_\text{IFC}(\alpha,\Red{\lambda})}.
\section{Security Guarantees}
\label{sec:formal}

We are interested in proving non-interference about many programming
languages.  This requires an appropriate definition of this notion
that is language agnostic, so in this section, we present a few
general definitions for what an information flow control language is
and what non-interference properties it may have.
In particular, we show that \ensuremath{L_\text{IFC}(\alpha,\Red{\lambda})}, with an appropriate
scheduler \ensuremath{\alpha}, satisfies non-interference~\cite{Goguen82},
without making any reference to properties of \ensuremath{\Red{\lambda}}.
We state the appropriate theorems here, and provide the formal
proofs in
\appref{sec:app:proof}.

\cut{
The precise scheduling policy dictates
what guarantee we can achieve for programs with diverging tasks.
For a sequential scheduler \ensuremath{\textsc{Seq}}, we will only be able to show
\emph{termination-insensitive non-interference}, where a program
may diverge based on secret data; this allows attackers to observe
secrets by observing the termination of tasks.
For the concurrent round-robin schedule \ensuremath{\textsc{RR}},
we can show a stronger result known as
\emph{termination-sensitive non-interference},
where termination attacks cannot leak information.
}

\subsection{Erasure Function}

When defining the security guarantees of an information flow control,
we must characterize what the \emph{secret inputs} of a program are.  Like
other work~\cite{Li+:2010:arrows,Russo+:Haskell08,lio,stefan:addressing-covert},
we specify and prove non-interference using \emph{term erasure}.
Intuitively, term erasure allows us to show that an attacker does not learn
any sensitive information from a program if the program behaves identically
(from the attackers point of view) to a program with all sensitive data
``erased''.
To interpret a language under information flow control, we define a function \ensuremath{\varepsilon_{l}} that
performs erasures by mapping configurations to erased configurations,
usually by rewriting (parts of) configurations that are more sensitive
than \ensuremath{l} to a new syntactic construct \ensuremath{\bullet}.  We define
an information flow control language as follows:

\begin{definition}[Information flow control language]
    An information flow control language \ensuremath{\textbf{L}} is a tuple \ensuremath{(\Delta,\hookrightarrow,\varepsilon_{l})}, where $\ensuremath{\Delta}$ is the type of machine configurations (members
    of which are usually denoted by the metavariable \ensuremath{c}), \ensuremath{\hookrightarrow} is a
    reduction relation between machine configurations and \ensuremath{\varepsilon_{l}\mathbin{:}\Delta\rightarrow\varepsilon(\Delta)}
    is an erasure function parametrized on labels from machine configurations to \emph{erased} machine
    configurations \ensuremath{\varepsilon(\Delta)}.  Sometimes, we use \ensuremath{V} to refer to set of
     terminal configurations in \ensuremath{\Delta}, i.e., configurations where
     no further transitions are possible.
\end{definition}

Our language \ensuremath{L_\text{IFC}(\alpha,\Red{\lambda}\;\!\!)} fulfills
this definition as \ensuremath{(\ifc{\Delta},\overset{\alpha}{\hookrightarrow},\varepsilon_{\ifc{l}})}, where
$\ensuremath{\ifc{\Delta}} = \ensuremath{\ifc{\Sigma}} \times \operatorname{List}(\ensuremath{\ifc{\Varid{t}}})$.  The set of terminal conditions
$\ensuremath{\ifc{V}}$ is $\ensuremath{\ifc{\Sigma}} \times \ensuremath{\ifc{\Varid{t}}}_V$, where $\ensuremath{\ifc{\Varid{t}}}_V \subset \ensuremath{\ifc{\Varid{t}}}$ is the
type for tasks whose expressions have been reduced to
values.\footnote{
  Here, we abuse notation by describing types for configuration parts using the
  same metavariables as the ``instance'' of the type, e.g., \ensuremath{\ifc{\Varid{t}}} for the type of
  task.
}
The erased configuration \ensuremath{\varepsilon(\ifc{\Delta})} extends \ensuremath{\ifc{\Delta}} with configurations
containing \ensuremath{\bullet}, and Fig.~\ref{fig:erasure} gives the precise definition for
our erasure function \ensuremath{\varepsilon_{\ifc{l}}}.
Essentially, a task and its corresponding message queue is completely erased from the task
list if its label does not flow to the attacker observation level \ensuremath{\ifc{l}}.
Otherwise, we apply the erasure function homomorphically and remove any messages
from the task's message queue that are more sensitive than \ensuremath{\ifc{l}}.

\begin{figure} % fig:erasure
\begin{align*}
  &\ensuremath{\varepsilon_{\ifc{l}}(\ifc{\Sigma};\ifc{\Varid{ts}})\mathrel{=}\varepsilon_{\ifc{l}}(\ifc{\Sigma});\text{filter}\;(\lambda \ifc{\Varid{t}}.\ifc{\Varid{t}}\mathrel{=}\bullet)\;(\text{map}\;\varepsilon_{\ifc{l}}\;\ifc{\Varid{ts}})}\\
  &\ensuremath{\varepsilon_{\ifc{l}}(\langle \tar{\Sigma}, \ifc{\Varid{e}}\rangle^{\ifc{\Varid{i}}}_{\ifc{l}'})\mathrel{=}} \begin{cases}
    \ensuremath{\bullet} & \ensuremath{\ifc{l}'\;\not\flows{}\;\ifc{l}} \\
    \ensuremath{\langle \varepsilon_{\ifc{l}}(\tar{\Sigma}), \varepsilon_{\ifc{l}}(\ifc{\Varid{e}})\rangle^{\ifc{\Varid{i}}}_{\ifc{l}'}} & \text{otherwise}
  \end{cases} \\
  &\ensuremath{\varepsilon_{\ifc{l}}(\ifc{\Sigma}\left[\ifc{\Varid{i}}\mapsto{}\ifc{\Theta}\right])\mathrel{=}} \begin{cases}
    \ensuremath{\varepsilon_{\ifc{l}}(\ifc{\Sigma})} & \text{\ensuremath{\ifc{l}'\;\not\flows{}\;\ifc{l}}, where \ensuremath{\ifc{l}'} is the label of thread \ensuremath{\ifc{\Varid{i}}}}\\
    \ensuremath{\varepsilon_{\ifc{l}}(\ifc{\Sigma})\left[\ifc{\Varid{i}}\mapsto{}\varepsilon_{\ifc{l}}(\ifc{\Theta})\right]} & \text{otherwise}
  \end{cases} \\
  &\ensuremath{\varepsilon_{\ifc{l}}(\ifc{\Theta})\mathrel{=}\ifc{\Theta}\preceq\ifc{l}}  \qquad\qquad\qquad  \ensuremath{\varepsilon_{\ifc{l}}(\emptyset)\mathrel{=}\emptyset}
\end{align*}
\vspace*{-0.8cm}
\caption{ Erasure function for tasks, queue maps, message queues, and
configurations.  In all other cases, including target-language constructs,
\ensuremath{\varepsilon_{\ifc{l}}} is applied homomorphically.
Note that \ensuremath{\varepsilon_{\ifc{l}}(\ifc{\Varid{e}})} is always equal to \ensuremath{\ifc{\Varid{e}}} (and similar for \ensuremath{\tar{\Sigma}})
in this simple setting.  However,
when the IFC language is extended with more constructs as shown in
Section~\ref{sec:extensions}, then this will no longer be the case.
\label{fig:erasure} }
\end{figure}

The definition of an erasure function is quite important: it captures
the attacker model, stating what can and cannot be observed by the attacker.
In our case, we assume that the attacker cannot observe sensitive tasks or
messages, or even the number of such entities.
While such assumptions are standard~\cite{Castellani:Boudol:ICALP01,
stefan:addressing-covert}, our definitions allow for
stronger attackers that may be able to inspect resource usage.\footnote{
  We believe that we can extend \ensuremath{L_\text{IFC}(\alpha,\Red{\lambda}\;\!\!)} to
  such models using the resource limits techniques of~\cite{yangresource}.
  We leave this extension to future work.
}

\subsection{Non-Interference}

Given an information flow control language, we can now define non-interference.
Intuitively, we want to make statements about the attacker's observational
power at some security level \ensuremath{l}.  This is done by defining an equivalence
relation called \ensuremath{l}-equivalence on configurations: an attacker should
not be able to distinguish two configurations that are \ensuremath{l}-equivalent.
Since our erasure function captures what an attacker can or cannot observe, we simply define this
equivalence as the syntactic-equivalence of erased configurations~\cite{stefan:addressing-covert}.
\begin{definition}[\ensuremath{l}-equivalence]
    In a language \ensuremath{(\Delta,\hookrightarrow,\varepsilon_{l})}, two machine configurations
    \ensuremath{c,c'\in\Delta} are considered $l$-equivalent, written as \ensuremath{c\approx_lc'},
    if \ensuremath{\varepsilon_{l}(c)\mathrel{=}\varepsilon_{l}(c')}.
\end{definition}

We can now state that a language satisfies non-interference if an
attacker at level \ensuremath{l} cannot distinguish the runs of any two \ensuremath{l}-equivalent
configurations.
This particular property is called termination sensitive non-interference
(TSNI).  Besides the obvious requirement to not leak secret information
to public channels, this definition also requires the termination
of public tasks to be independent of secret tasks.
Formally, we define TSNI as follows:

\begin{definition}[Termination Sensitive Non-Interference (TSNI)]
  A language \ensuremath{(\Delta,\hookrightarrow,\varepsilon_{l})} satisfies termination
  sensitive non-interference if for any label \ensuremath{l}, and configurations
  $\ensuremath{c_{1},c_{1}',c_{2}}\in\ensuremath{\Delta}$, if
  \begin{equation} \label{eq:tsni-lhs}
    \ensuremath{c_{1}} \approx_{\ensuremath{l}} \ensuremath{c_{2}}
    \qquad \text{and} \qquad
    \ensuremath{c_{1}} \ensuremath{\hookrightarrow}^* \ensuremath{c_{1}'}
  \end{equation}
  then there exists a configuration $\ensuremath{c_{2}'}\in\ensuremath{\Delta}$ such that
  \begin{equation} \label{eq:tsni-rhs}
    \ensuremath{c_{1}'} \approx_{\ensuremath{l}} \ensuremath{c_{2}'}
     \qquad \text{and} \qquad
    \ensuremath{c_{2}} \ensuremath{\hookrightarrow}^* \ensuremath{c_{2}'}
    \ \text{.}
  \end{equation}
\end{definition}
In other words, if we take two \ensuremath{l}-equivalent configurations, then for every
intermediate step taken by the first configuration, there is a corresponding
number of steps that the second configuration can take to result in a
configuration that is \ensuremath{l}-equivalent to the first resultant configuration.
By symmetry, this applies to all intermediate steps from the second configuration
as well.
\ifextended
We remark that this notion of non-interfernce is similar to
\emph{progress sensitive non-interference (PSNI)}, which accounts for
leakage via progress (or termination) channels, as used for static
systems~\cite{moore2012precise}.
\fi

Our language satisfies TSNI 
\ifextended
(and thus PSNI)
\fi
under the round-robin scheduler
\ensuremath{\textsc{RR}} of Fig.~\ref{fig:scheduler}.
\begin{theorem}[Concurrent IFC language is TSNI]
  \label{thm:rr-tsni}
For any target language \ensuremath{\Red{\lambda}}, \ensuremath{L_\text{IFC}(\textsc{RR},\Red{\lambda})} satisfies TSNI.
\end{theorem}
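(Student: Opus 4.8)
The plan is to prove TSNI by the standard term-erasure technique, reducing the theorem to a single-step unwinding lemma and then inducting on the length of $c_1 \hookrightarrow^* c_1'$. Concretely, I would first record that $\approx_l$ is an equivalence relation and that $\varepsilon_l$ is idempotent ($\varepsilon_l \circ \varepsilon_l = \varepsilon_l$), and then establish the following matching lemma: if $c_1 \approx_l c_2$ and $c_1 \hookrightarrow c_1''$, then there is a $c_2''$ with $c_2 \hookrightarrow^* c_2''$ and $c_1'' \approx_l c_2''$. Given this lemma the theorem follows by induction on the number of steps in $c_1 \hookrightarrow^* c_1'$: the base case takes $c_2' = c_2$, and the inductive step chains the lemma with the induction hypothesis, concatenating the two $\hookrightarrow^*$ witnesses on the $c_2$ side.

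The matching lemma rests on one structural invariant plus a case split on the label of the head task of $c_1$ (the unique task the \textsc{RR} rules reduce). The invariant is \emph{label monotonicity}: a task's current label never decreases, since the only rule that changes it, \textsc{I-setLabel}, requires $l \flows l'$. Consequently a non-observable task (label $\not\flows l$) stays non-observable, a \textbf{send} issued by a non-observable task carries a label $\not\flows l$ (because $l \flows l'$ forces it), and a task sandboxed by a non-observable task is itself non-observable. With this in hand I treat two cases. If the head task of $c_1$ is non-observable, the step is invisible: it only mutates non-observable tasks, removes non-observable queue entries, or adds queue entries that are themselves non-observable, so $\varepsilon_l(c_1'') = \varepsilon_l(c_1) = \varepsilon_l(c_2)$ and I take $c_2'' = c_2$ with zero steps. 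If the head task of $c_1$ is observable, then (since $\varepsilon_l$ preserves the order of observable tasks and the head is the first task) the same observable task is the first observable task of $c_2$, preceded in $c_2$ only by non-observable tasks; I fast-forward $c_2$ by repeatedly reducing its head until that matching observable task reaches the front, and then replay the same rule.

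The bulk of the routine work is a rule-by-rule verification that $\varepsilon_l$ commutes with an observable step, i.e.\ that a rule's effect on an observable task depends only on the observable ($\flows l$) part of the configuration. For \textsc{I-getLabel}, \textsc{I-getTaskId}, \textsc{I-labelOp}, the border rules, and the embedded target-language steps this is immediate, as they depend only on the task's own (observable) state. For \textbf{send} and \textbf{recv} I would verify queue consistency: an observable task with label $l_L \flows l$ receives by filtering its queue with $\preceq l_L$, and every message it can thereby receive has label $\flows l_L \flows l$, hence is retained by $\varepsilon_l$, so the received message agrees across $c_1$ and $c_2$; dually, any message it sends with label $\not\flows l$ is filtered out by $\varepsilon_l$ on both sides and so cannot disturb $l$-equivalence. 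Determinism of $\hookrightarrow$ (inherited from determinism of the target $\rightarrow$ and from \textsc{RR} being a function) guarantees that the rule replayed in $c_2$ is forced to be the same one.

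The main obstacle---and where the \emph{termination-sensitive} strength and the choice of \textsc{RR} genuinely enter---is justifying the fast-forward and its termination. I must show that clearing the non-observable tasks ahead of the matching observable task in $c_2$ (i) terminates and (ii) leaves $\varepsilon_l(c_2)$ unchanged. Termination is exactly where \textsc{RR} fairness and \textsc{I-noStep} are essential: each head task either takes an ordinary step and is rotated to the \emph{back} by $\textsc{RR}_{\text{step}}$ or $\textsc{RR}_{\text{sandbox}}$ (a freshly sandboxed child is also appended at the back, never ahead of the target), or it is removed by $\textsc{RR}_{\text{done}}$ or $\textsc{RR}_{\text{noStep}}$; either way the number of tasks ahead of the target strictly decreases, so the target surfaces after finitely many steps even if some non-observable task would otherwise diverge or get stuck. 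This is precisely what fails for the sequential scheduler \textsc{Seq}, where a diverging non-observable task blocks the observable one and only a termination-\emph{insensitive} guarantee survives. The one delicate corner case is an observable task that raises its own label above $l$ via \textbf{setLabel}: it is an observable head step that produces a now-non-observable residual task which $\varepsilon_l$ deletes, so the matching lemma must be phrased as simulation \emph{up to re-erasure}---the replayed step leaves behind a terminal, non-observable task contributing nothing to $\varepsilon_l$---which idempotence of $\varepsilon_l$ lets me absorb.
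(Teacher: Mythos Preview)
Your proposal is correct and follows essentially the same route as the paper: the paper reduces the theorem to a single-step matching lemma (its Lemma~\ref{lemma:rr-tsni-general}) proved by the same case split on whether the head task is erased, and handles the observable case by repeatedly applying a ``high task can always step without observable effect'' lemma (its Lemma~\ref{lemma:high-not-blocking}) to clear the non-observable prefix of $c_2$---exactly your fast-forward argument, with the same termination reasoning based on \textsc{RR} rotation and \textsc{I-noStep}. The only cosmetic difference is that the paper derives the theorem as the trivial-predicate instance of the more general Theorem~\ref{thm:restricted}, and it does not single out the \textbf{setLabel}-raises-above-$l$ case (both sides take the identical step, so $l$-equivalence is immediate without any ``up to re-erasure'' phrasing).
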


In general, however, non-interference will not hold for an arbitrary scheduler \ensuremath{\alpha}.
For example, \ensuremath{L_\text{IFC}(\alpha,\Red{\lambda})} with a scheduler that inspects a
sensitive task's current state when deciding which task to schedule next
will in general break non-interference~\cite{russo2006securing,BartheRRS07}.
%
%In Section~\toref{} we detail the preceise scheduler for which our
%specification language is TSNI.
%

However, even non-adversarial schedulers are not always safe.
Consider, for example, the sequential scheduling policy \ensuremath{\textsc{Seq}} given in
Fig.~\ref{fig:scheduler}.
It is easy to show that \ensuremath{L_\text{IFC}(\textsc{Seq},\Red{\lambda})} does not satisfy
TSNI:
consider a target language similar to \ensuremath{\Red{\lambda_{\text{ES}}}} with an
additional expression terminal \ensuremath{\tar{\Uparrow}} that denotes a divergent computation,
i.e., \ensuremath{\tar{\Uparrow}} always reduces to \ensuremath{\tar{\Uparrow}} and a simple label lattice \ensuremath{\{\mskip1.5mu \mathsf{pub},\mathsf{sec}\mskip1.5mu\}} such that \ensuremath{\mathsf{pub}\flows{}\mathsf{sec}}, but \ensuremath{\mathsf{sec}\not\flows{}\mathsf{pub}}.
Consider the following two configurations in this language:
\begin{hscode}\SaveRestoreHook
\column{B}{@{}>{\hspre}l<{\hspost}@{}}%
\column{71}{@{}>{\hspre}l<{\hspost}@{}}%
\column{E}{@{}>{\hspre}l<{\hspost}@{}}%
\>[B]{}\ifc{c}_{1}\mathrel{=}\ifc{\Sigma};\langle \tar{\Sigma}_{1}, \ifc{^{\textrm{IT}}\lceil}\;\mathbf{if}\;\mathbf{false}\;\mathbf{then}\;\tar{\Uparrow}\;\mathbf{else}\;\mathbf{true}\ifc{\rceil}\rangle^{\mathrm{1}}_{\mathsf{sec}},{}\<[71]%
\>[71]{}\langle \tar{\Sigma}_{2}, \ifc{\Varid{e}}\rangle^{\mathrm{2}}_{\mathsf{pub}}{}\<[E]%
\\
\>[B]{}\ifc{c}_{2}\mathrel{=}\ifc{\Sigma};\langle \tar{\Sigma}_{1}, \ifc{^{\textrm{IT}}\lceil}\;\mathbf{if}\;\mathbf{true}\;\mathbf{then}\;\tar{\Uparrow}\;\mathbf{else}\;\mathbf{true}\ifc{\rceil}\rangle^{\mathrm{1}}_{\mathsf{sec}},{}\<[71]%
\>[71]{}\langle \tar{\Sigma}_{2}, \ifc{\Varid{e}}\rangle^{\mathrm{2}}_{\mathsf{pub}}{}\<[E]%
\ColumnHook
\end{hscode}\resethooks
These two configurations are \ensuremath{\mathsf{pub}}-equivalent, but \ensuremath{\ifc{c}_{1}} will reduce
(in two steps)
to \ensuremath{\ifc{c}_{1}'\mathrel{=}\ifc{\Sigma};\langle \tar{\Sigma}_{1}, \ifc{^{\textrm{IT}}\lceil}\mathbf{true}\ifc{\rceil}\rangle^{\mathrm{2}}_{\mathsf{pub}}}, whereas \ensuremath{\ifc{c}_{2}} will not make
any progress.
Suppose that \ensuremath{\ifc{\Varid{e}}} is a computation that writes to a \ensuremath{\mathsf{pub}} channel,\footnote{
Though we do not model labeled channels, extending the calculus with such a
feature is straightforward, see Section~\ref{sec:extensions}.}
then the \ensuremath{\mathsf{sec}} task's decision to diverge or not is directly leaked to a
public entity.

% sheule: I feel like this forward reference is not useful here.
%In certain cases, however, a TSNI-safe concurrent scheduling policy such as
%|roundrobinf| cannot be considered without imposing a non-trivial burden on the
%concrete implementation.
%(Section~\ref{ref:real} describes a particular embedding with JavsScript as the
%target language.)
%
To accommodate for sequential languages, or cases where a weaker guarantee
is sufficient, we consider an alternative non-interference property called termination insensitive
non-interference (TINI).  This property can also be upheld by sequential languages at the cost
of leaking through (non)-termination~\cite{Askarov:2008}.
\begin{definition}[Termination insensitive non-interference (TINI)]
  A language \ensuremath{(\Delta,V,\hookrightarrow,\varepsilon_{l})} is termination
  insensitive non-interfering if for any label \ensuremath{l}, and configurations
  $\ensuremath{c_{1},c_{2}}\in\ensuremath{\Delta}$ and $\ensuremath{c_{1}',c_{2}'}\in\ensuremath{V}$, it holds that
  \[
    (\ensuremath{c_{1}} \approx_{\ensuremath{l}} \ensuremath{c_{2}}
    \;\;\land\;\;
    \ensuremath{c_{1}} \ensuremath{\hookrightarrow}^* \ensuremath{c_{1}'}
    \;\;\land\;\;
    \ensuremath{c_{2}} \ensuremath{\hookrightarrow}^* \ensuremath{c_{2}'})
    \implies
    \ensuremath{c_{1}'} \approx_{\ensuremath{l}} \ensuremath{c_{2}'}
  \]
\end{definition}

TINI states that if we take two \ensuremath{l}-equivalent configurations, and both
configurations reduce to final configurations (i.e.,
configurations for which there are no
possible further transitions), then the end configurations are also
\ensuremath{l}-equivalent.
We highlight that this statement is much weaker than TSNI: it only states that
terminating programs do not leak sensitive data, but makes no statement
about non-terminating programs.

As shown by compilers~\cite{jif,FlowCaml}, interpreters~\cite{JSFlow}, and
libraries~\cite{Russo+:Haskell08,lio}, TINI is useful for sequential
settings. In our case, we show that our IFC language with the sequential scheduling policy
\ensuremath{\textsc{Seq}} satisfies TINI.
\begin{theorem}[Sequential IFC language is TINI]
  \label{thm:seq-tini}
For any target language \ensuremath{\Red{\lambda}}, \ensuremath{L_\text{IFC}(\textsc{Seq},\Red{\lambda})} satisfies TINI.
\end{theorem}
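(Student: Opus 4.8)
The plan is to prove TINI by the standard term-erasure method: establish a single-step simulation lemma (erasure commutes with reduction) together with determinism of the induced erased reduction, and then combine the two. The argument stays agnostic to the target language $\Red{\lambda}$: since every task carries its own isolated target state $\tar{\Sigma}$, a task is either wholly $l$-visible or wholly erased, so the target reduction $\rightarrow$ enters only as a deterministic black box that commutes with the homomorphic erasure $\varepsilon_l$; I never need to erase a fragment of target state in a way the opaque $\rightarrow$ could disturb.

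First I would prove the simulation lemma: whenever $c \hookrightarrow c'$ under $\textsc{Seq}$, either $\varepsilon_l(c) = \varepsilon_l(c')$ or $\varepsilon_l(c)$ reduces to $\varepsilon_l(c')$ in the erased semantics (i.e.\ step once, then re-apply $\varepsilon_l$). The proof is a case split on the firing rule and on whether the scheduled task's current label $l'$ satisfies $l' \sqsubseteq l$. For an invisible step ($l' \not\sqsubseteq l$) the task is erased in both $c$ and $c'$, so I must check the $l$-projection is untouched: this rests on the current label being monotonically non-decreasing (so $\mathbf{setLabel}$ keeps a secret task secret), on $\mathbf{send}$ from a secret task producing only messages labeled $\sqsupseteq l' \not\sqsubseteq l$ (hence filtered by $\preceq l$ from every recipient queue), and on a secret $\mathbf{sandbox}$ spawning a secret, thus erased, child. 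For a visible step ($l' \sqsubseteq l$) the task survives and the matching rule fires on $\varepsilon_l(c)$; the one delicate sub-case is $\mathbf{setLabel}$ lifting a public task above $l$, where the result is erased---precisely why the erased reduction must be read as ``step, then erase.''

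Next I would note that $\hookrightarrow$ is deterministic---$\rightarrow$ is deterministic by assumption, the IFC rules are syntax-directed, and $\textsc{Seq}$ is a function of the task list---so the erased reduction is deterministic as well. TINI then follows directly. Given $\varepsilon_l(c_1) = \varepsilon_l(c_2)$ and runs $c_1 \hookrightarrow^* c_1' \in V$, $c_2 \hookrightarrow^* c_2' \in V$, the simulation lemma lifts each run to an erased run $\varepsilon_l(c_1) \hookrightarrow^* \varepsilon_l(c_1')$ and $\varepsilon_l(c_2) \hookrightarrow^* \varepsilon_l(c_2')$, collapsing the stuttering invisible steps. Since $c_1', c_2' \in V$, their erasures are terminal erased configurations, and since both erased runs start from the \emph{same} configuration and the erased reduction is deterministic, the unique maximal sequence forces $\varepsilon_l(c_1') = \varepsilon_l(c_2')$, i.e.\ $c_1' \approx_l c_2'$. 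Crucially this never aligns the two runs step-for-step, so a secret computation that diverges under $\textsc{Seq}$ is simply excluded by the hypothesis that both sides reach $V$---exactly the gap between TINI and the stronger TSNI of Theorem~\ref{thm:rr-tsni}.

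I expect the main obstacle to be the simulation lemma, and within it the interplay between erasure and the sequential scheduler. The subtlety is that when the front task is secret the real machine keeps executing it---and under $\textsc{Seq}$ runs its sandboxed secret descendants to completion before returning to public work---while the erased configuration has already deleted it; I must show this whole secret excursion is invisible (every step giving $\varepsilon_l(c) = \varepsilon_l(c')$) and leaves the public tasks, their message queues, and the scheduling order exactly as the erased machine maintains them. This reduces to checking that $\varepsilon_l$ commutes with the three $\textsc{Seq}$ list operations---keep-front, drop-front, and insert-at-front---which holds because erasure deletes secret tasks while preserving the relative order of the surviving public ones, letting determinism glue the two projections together.
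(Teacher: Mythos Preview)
Your approach is sound and in fact more complete than what the paper supplies. The paper's argument for this theorem is a one-line reduction: $L_\text{IFC}(\textsc{Seq},\Red{\lambda})$ is the instance of the restricted language $L_\text{IFC}^{\mathcal{P}}(\textsc{Seq},\Red{\lambda})$ with always-true predicates, so the claim follows from the TINI clause of Theorem~\ref{thm:restricted}. But the appendix only establishes the \emph{TSNI} clause of that theorem; Lemmas~\ref{lemma:high-not-blocking} and~\ref{lemma:rr-tsni-general} are specific to $\textsc{RR}$ (the former concludes that a secret head task is rotated past the remaining list after one step, which is simply false under $\textsc{Seq}$), and the proof labelled ``Proof of Theorem~\ref{thm:restricted}, TSNI'' treats only that case. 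The TINI half for $\textsc{Seq}$ is asserted but never argued.

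Where the paper does give details, its decomposition also differs from yours: it proves a direct one-step bisimulation on $l$-equivalence (Lemma~\ref{lemma:rr-tsni-general}: if $c_1\approx_l c_2$ and $c_1\hookrightarrow c_1'$ then $c_2\hookrightarrow^* c_2'$ with $c_1'\approx_l c_2'$) rather than factoring through an erased transition system. That bisimulation lemma genuinely fails for $\textsc{Seq}$, since a diverging secret task at the head of $c_2$ blocks the matching public step forever---which is exactly why $\textsc{Seq}$ only enjoys TINI. Your route, projecting both terminating runs onto a single deterministic erased trace and observing that the erasures of configurations in $V$ are fixpoints of that trace, avoids having to align the two runs step-for-step and uses precisely the extra hypothesis TINI grants (both sides reach $V$). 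So your argument is not a paraphrase of the paper's but the natural proof for the sequential setting, and it fills a gap the paper leaves open.
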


% Local Variables:
% TeX-master: "main.tex"
% TeX-command-default: "Make"
% End:
\section{Isomorphisms and Restrictions}
\label{sec:concrete}

\newcommand{\con}[1]{\ensuremath{{\color{red} #1}}}
\newcommand{\abs}[1]{\ensuremath{{\color{blue} #1}}}

The operational semantics we have defined in the previous section
satisfy non-interference by design.
We achieve this general statement that works for a large class of
languages by having different tasks executing completely isolated from
each other, such that every task has its own state.
In some cases, this strong separation is desirable, or even necessary.
Languages like C provide direct access to memory locations without
mechanisms in the language to achieve a separation of the heap.
On the other hand, for other languages, this
strong isolation of tasks can be
undesirable, e.g., for performance reasons.
For instance, for the language \ensuremath{\Red{\lambda_{\text{ES}}}}, our presentation so far
requires a separate heap per task, which is not very practical.
Instead, we would like to
more tightly couple the integration of the target and IFC
languages by reusing existing infrastructure.  In the running example,
a concrete implementation might use a single global heap.
More precisely, instead of using a configuration of the form
$\ensuremath{\ifc{\Sigma};\langle \tar{\Sigma}_{1}, \ifc{\Varid{e}}_{1}\rangle^{\ifc{\Varid{i}}_{1}}_{\ifc{l}_{1}},\langle \tar{\Sigma}_{2}, \ifc{\Varid{e}}_{2}\rangle^{\ifc{\Varid{i}}_{2}}_{\ifc{l}_{2}}\;\ldots}$
we would like a single global heap as in
$\ensuremath{\ifc{\Sigma};\tar{\Sigma};\langle \ifc{\Varid{e}}_{1}\rangle^{\ifc{\Varid{i}}_{1}}_{\ifc{l}_{1}},\langle \ifc{\Varid{e}}_{2}\rangle^{\ifc{\Varid{i}}_{2}}_{\ifc{l}_{2}},\ldots}$

If the operational rules are adapted na\"ively to this new setting,
then non-interference can be violated: as we mentioned earlier,
shared mutable cells could be used to leak sensitive information.
What we would like is a way of characterizing safe modifications to
the semantics which preserve non-interference.
The intention of our single heap implementation is to permit
efficient execution while \emph{conceptually maintaining isolation between
tasks} (by not allowing sharing of references between them).
This intuition of having a different (potentially more efficient)
concrete semantics that behaves like the abstract semantics
can be formalized by the following definition:

% con no-prime red target
% abs prime blue ifc

\begin{definition}[Isomorphism of information flow control languages]
  A language \ensuremath{(\Delta,\hookrightarrow,\varepsilon_{l})} is \textit{isomorphic} to a
  language \ensuremath{(\Delta',\hookrightarrow',\varepsilon'_{l})} if there exist total functions \ensuremath{\Varid{f}\mathbin{:}\Delta\rightarrow\Delta'} and \ensuremath{\Varid{f}^{-1}\mathbin{:}\Delta'\rightarrow\Delta} such that \ensuremath{\Varid{f}\circ\Varid{f}^{-1}\mathrel{=}id_{\Delta}} and \ensuremath{\Varid{f}^{-1}\circ\Varid{f}\mathrel{=}id_{\Delta'}}.  Furthermore, \ensuremath{\Varid{f}} and \ensuremath{\Varid{f}^{-1}} are functorial (e.g., if
  $x'\ R'\ y'$ then $f(x')\ R\ f(y')$) over both
  $l$-equivalences and \ensuremath{\hookrightarrow}.

  If we weaken this restriction such that \ensuremath{\Varid{f}^{-1}} does
  not have to be functorial over \ensuremath{\hookrightarrow}, we call the
  language \ensuremath{(\Delta,\hookrightarrow,\varepsilon_{l})} \textit{weakly isomorphic} to
  \ensuremath{(\Delta',\hookrightarrow',\varepsilon'_{l})}.
\end{definition}

Providing an isomorphism between the two languages allows us to
preserve (termination sensitive or insensitive) non-interference
as the following two theorems state.

\begin{theorem}[Isomorphism preserves TSNI]
  \label{thm:iso-tsni}
  If $L$ is isomorphic to $L'$ and $L'$ satisfies TSNI, then
  $L$ satisfies TSNI.
\end{theorem}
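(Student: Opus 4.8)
The plan is to prove TSNI for $L$ by transporting the entire scenario into $L'$ along $f$, invoking the assumed TSNI of $L'$ there, and transporting the resulting witness back along $f^{-1}$. Concretely, suppose we are given $c_1, c_1', c_2 \in \Delta$ with $c_1 \approx_l c_2$ and $c_1 \hookrightarrow^* c_1'$; we must produce some $c_2' \in \Delta$ with $c_1' \approx_l c_2'$ and $c_2 \hookrightarrow^* c_2'$. The isomorphism supplies exactly the four transport facts we will use: $f$ is functorial over $\approx_l$ and over $\hookrightarrow$, $f^{-1}$ is functorial over $\approx'_l$ and over $\hookrightarrow'$, and $f, f^{-1}$ are mutually inverse bijections.

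First I would push the two hypotheses through $f$. Functoriality of $f$ over $l$-equivalence turns $c_1 \approx_l c_2$ into $f(c_1) \approx'_l f(c_2)$. Functoriality of $f$ over $\hookrightarrow$ gives single-step preservation, and a short induction on the number of steps lifts this to the reflexive--transitive closure, so $c_1 \hookrightarrow^* c_1'$ yields $f(c_1) (\hookrightarrow')^* f(c_1')$. This lifting lemma --- that a step-functorial map preserves $(\hookrightarrow)^*$ --- is the one piece of glue I would state and prove explicitly, although it is entirely routine (base case: zero steps; inductive step: append one preserved step to the induction hypothesis).

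Second, I would apply the TSNI hypothesis for $L'$ to the pair $f(c_1) \approx'_l f(c_2)$ and $f(c_1) (\hookrightarrow')^* f(c_1')$. This produces some $d_2' \in \Delta'$ with $f(c_1') \approx'_l d_2'$ and $f(c_2) (\hookrightarrow')^* d_2'$. Finally I would set $c_2' := f^{-1}(d_2')$ and pull both facts back. Functoriality of $f^{-1}$ over $l$-equivalence sends $f(c_1') \approx'_l d_2'$ to $f^{-1}(f(c_1')) \approx_l f^{-1}(d_2')$, and since $f^{-1}(f(c_1')) = c_1'$ this is exactly $c_1' \approx_l c_2'$. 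Functoriality of $f^{-1}$ over $\hookrightarrow'$ (lifted to $(\hookrightarrow')^*$ by the same induction) sends $f(c_2) (\hookrightarrow')^* d_2'$ to $f^{-1}(f(c_2)) \hookrightarrow^* f^{-1}(d_2')$, i.e.\ $c_2 \hookrightarrow^* c_2'$. These are precisely the two conclusions required.

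The only real subtlety --- and the place where the strength of the isomorphism matters --- is the last step: transporting the existential reduction witness $f(c_2) (\hookrightarrow')^* d_2'$ back into $L$ requires $f^{-1}$ to be functorial over reduction, which is exactly the condition a \emph{weak} isomorphism drops. I therefore expect the main conceptual point to be isolating why the full (non-weak) isomorphism is indispensable here, in contrast to the TINI setting, where both configurations are run forward on their own side and only an $l$-equivalence, never a reduction, is ever pulled back through $f^{-1}$. Everything else is bookkeeping: verifying that each direction of functoriality is applied to the correct relation, and that the bijection identities $f^{-1} \circ f = id$ and $f \circ f^{-1} = id$ collapse the composite maps at the right moments.
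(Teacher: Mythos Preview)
Your proposal is correct and follows exactly the approach sketched in the paper: transport the hypotheses to $L'$ via $f$, apply TSNI there, and transport the witness back via $f^{-1}$. The paper's proof is only a one-sentence sketch, so your version simply spells out the details (including the routine lifting to $\hookrightarrow^*$ and the observation about why full, not weak, isomorphism is needed); there is no substantive difference.
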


\begin{proof}
  Shown by transporting configurations and reduction derivations from
  $L$ to $L'$, applying TSNI, and then transporting the
  resulting configuration, $l$-equivalence and multi-step derivation back.
  \qed
\end{proof}

Only weak isomorphism is necessary for TINI. Intuitively, this is because
it is not necessary to back-translate reduction sequences in $L'$ to
$L$; by the definition of TINI, we have both reduction sequences in $L$
by assumption.

\begin{theorem}[Weak isomorphism preserves TINI]
  \label{thm:iso-tini}
  If a language $L$ is weakly isomorphic to a language $L'$, and $L'$
  satisfies TINI, then $L$ satisfies TINI.
\end{theorem}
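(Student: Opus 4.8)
The plan is to mirror the structure of the proof of Theorem~\ref{thm:iso-tsni}, but to observe that the back-translation only ever touches an $l$-equivalence, never a reduction sequence. Write $L = (\Delta,\hookrightarrow,\varepsilon_{l})$ and $L' = (\Delta',\hookrightarrow',\varepsilon'_{l})$, and let $f\colon\Delta\to\Delta'$ and $f^{-1}\colon\Delta'\to\Delta$ be the mutually inverse bijections witnessing the weak isomorphism, so that $f$ is functorial over both $l$-equivalence and $\hookrightarrow$, while $f^{-1}$ is functorial over $l$-equivalence only. Fix a label $l$ and configurations $c_{1},c_{2}\in\Delta$, $c_{1}',c_{2}'\in V$ with $c_{1}\approx_{l}c_{2}$, $c_{1}\hookrightarrow^{*}c_{1}'$ and $c_{2}\hookrightarrow^{*}c_{2}'$; the goal is $c_{1}'\approx_{l}c_{2}'$. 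First I would push all three hypotheses forward through $f$ into $L'$, then invoke the assumed TINI of $L'$ there, and finally pull the single resulting $l$-equivalence back through $f^{-1}$.

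For the forward transport: from $c_{1}\approx_{l}c_{2}$ and functoriality of $f$ over $l$-equivalence I get $f(c_{1})\approx'_{l}f(c_{2})$. Lifting functoriality of $f$ over $\hookrightarrow$ from single steps to sequences by a routine induction on the length of the reduction, $c_{1}\hookrightarrow^{*}c_{1}'$ gives $f(c_{1})(\hookrightarrow')^{*}f(c_{1}')$, and likewise $f(c_{2})(\hookrightarrow')^{*}f(c_{2}')$. Applying the TINI of $L'$ to the two $l$-equivalent starting configurations $f(c_{1}),f(c_{2})$ and their terminal images $f(c_{1}'),f(c_{2}')$ then yields $f(c_{1}')\approx'_{l}f(c_{2}')$.

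The backward transport is where only weak isomorphism is needed. Since $f^{-1}$ is functorial over $l$-equivalence (a property weak isomorphism retains), $f(c_{1}')\approx'_{l}f(c_{2}')$ gives $f^{-1}(f(c_{1}'))\approx_{l}f^{-1}(f(c_{2}'))$, and because $f$ and $f^{-1}$ are mutually inverse this is exactly $c_{1}'\approx_{l}c_{2}'$, completing the argument. The point I would stress is that the only use of the isomorphism in the reverse direction is applied to an $l$-equivalence; both reduction sequences were handed to us by the TINI hypothesis and travel only forward, so functoriality of $f^{-1}$ over $\hookrightarrow$ is never invoked. This is precisely the ingredient that the TSNI proof does need (there the run of the second configuration is produced existentially inside $L'$ and must be carried back to $L$), and its absence here is harmless.

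The step I expect to be the main obstacle is the application of $L'$'s TINI, which demands that the forward images $f(c_{1}'),f(c_{2}')$ genuinely lie in the terminal set $V'$. Terminality is a derived notion, and weak isomorphism guarantees only that $f$ preserves forward steps; it does not, on its own, force $f(V)\subseteq V'$, since reflecting a stray $L'$-step back to an $L$-step would require exactly the functoriality of $f^{-1}$ over $\hookrightarrow$ that we have dropped. I would therefore discharge this by verifying that the witnessing bijection carries terminal configurations to terminal configurations (i.e.\ $f(V)\subseteq V'$), which holds for the intended single-heap/isolated-heap instances because the two semantics are kept in lockstep on the relevant configurations even where $f^{-1}$ fails to be a reduction functor globally. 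Everything else in the proof is routine induction on reduction length and rewriting along $f^{-1}\circ f = \mathrm{id}$.
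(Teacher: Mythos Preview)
Your proof is correct and follows essentially the same approach as the paper's: transport the hypotheses forward via $f$, apply TINI in $L'$, and pull the resulting $l$-equivalence back via $f^{-1}$, noting that only functoriality of $f^{-1}$ over $l$-equivalence is ever used. You are in fact more careful than the paper's sketch in flagging the terminality side condition $f(V)\subseteq V'$, which the paper does not address explicitly.
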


\begin{proof}
  Shown by transporting configurations and reduction derivations
  from $L$ to $L'$, applying TINI and transporting the resulting
  equivalence back using functoriality of \ensuremath{\Varid{f}^{-1}} over $l$-equivalences.
  \qed
\end{proof}

Unfortunately, an isomorphism is often too strong of a requirement.
To obtain an isomorphism with our single heap semantics, we need to mimic the
behavior of several heaps with a single actual heap.
The interesting cases are when we sandbox
an expression and when messages are sent and received.
The rule for sandboxing is
parametrized by the strategy \ensuremath{\kappa} (see Section~\ref{sec:retrofit}),
which defines what heap the new task
should execute with.  We have considered two choices:

\begin{itemize}
    \item When we sandbox into an empty heap, existing addresses
in the sandboxed expression are no longer valid and the
task will get stuck (and then removed by \textsc{I-noStep}).
Thus, we must rewrite the sandboxed expression so that
all addresses point to fresh addresses
guaranteed to not occur in the heap.  Similarly,
sending a memory address should be rewritten.

\item When we clone the heap, we have to copy everything
reachable from the sandboxed expression and replace all addresses
correspondingly.  Even worse, the behavior of sending a memory address
now depends on whether that address existed at the time the receiving
task was sandboxed;  if it did, then the address should be rewritten to the
existing one.
\end{itemize}

Isomorphism demands we implement this convoluted behavior,
despite our initial motivation of a more efficient implementation.

\subsection{Restricting the IFC Language}

A better solution is to forbid sandboxed expressions as well
as messages sent to other tasks to contain memory addresses in the
first place.  In a statically typed language, the type system could
prevent this from happening.
%, and we talk more about how our
%approach integrates with typed languages in
%Section~\ref{sec:extensions:types}.
In dynamically typed languages such as \ensuremath{\Red{\lambda_{\text{ES}}}}, we might
restrict the transition for \ensuremath{\mathbf{sandbox}} and \ensuremath{\mathbf{send}} to only allow expressions
without memory addresses.

While this sounds plausible, it is worth noting that we are modifying the IFC language semantics,
which raises the question of whether non-interference is preserved.
This question can be subtle: it is easy to remove a transition from
a language and invalidate TSNI.  Intuitively
if the restriction depends on secret data, then a public thread
can observe if some other task terminates or not, and from that obtain
information about the secret data that was used to restrict the
transition.
With this in mind, we require semantic rules to get restricted only
based on information observable by the task triggering them.
This ensures that non-interference is preserved, as the
restriction does not depend on confidential information.
Below, we give the formal definition of this condition for the
abstract IFC language \ensuremath{L_\text{IFC}(\alpha,\Red{\lambda})}.

%system in general does not preserve non-interference as the following
%minimal example illustrates.

% Again, we are modifying the IFC language semantics in a non-trivial way,
% which begs the question whether non-interference is preserved.
% Restricting the transitions in an information flow control
% system in general does not preserve non-interference as the following
% minimal example illustrates. \Red{Ale: I am not sure that the example is clear,
%   too abstract, at least, I don't get it. Propose to remove it, and just say
% the top level idea: ``the restriction should not be based on secret data!''}
% Consider a trivial state machine as shown
% in Figure~\ref{fig:trivial-sm}, whose second projection is classified
% secret.  With all three transitions, this language fulfills
% termination sensitive non-interference.  Furthermore, if the
% dashed transition
% is removed, the language continues to satisfy TSNI.  However, if any
% solid line is removed, the language fails TSNI.

% \begin{figure}
%   States: $(x,y)$ for $x,y \in \{0,1\}$ \\
%   Erasure function: $f(x,y) = (x,\bullet)$

%   \begin{center}\begin{tikzpicture}[node distance=2cm, auto]
%     \node (A) {$(1,1)$};
%     \node (B) [right of=A] {$(1,0)$};
%     \node (C) [below of=A] {$(0,1)$};
%     \node (D) [right of=C] {$(0,0)$};
%     \draw[->] (A) to node {} (B);
%     \draw[->] (C) to node {} (D);
%     \draw[->, dashed] (B) to node {} (D);
%     \end{tikzpicture}\end{center}

%   \label{fig:trivial-sm}
%   \caption{A trivial state machine}
% \end{figure}

%To remedy this issue,

\begin{definition}[Restricted IFC language]
  \label{def:restricted}
  For a family of predicates $\mathcal P$ (one for every reduction
  rule), we call
  \ensuremath{L_\text{IFC}^{\mathcal{P}}(\alpha,\Red{\lambda})} a restricted IFC language
  if its definition is equivalent to the abstract language
  \ensuremath{L_\text{IFC}(\alpha,\Red{\lambda})}, with the following exception:
  the reduction rules are restricted
  by adding a predicate $P \in \mathcal P$ to the premise of
  all rules other than \textsc{I-noStep}.  Furthermore, the predicate $P$
  can depend only on the \textit{erased} configuration
  \ensuremath{\varepsilon_{\ifc{l}}(\ifc{c})}, where \ensuremath{\ifc{l}} is the label of the first task
  in the task list and \ensuremath{\ifc{c}} is the full configuration.
\end{definition}

By the following theorem, the restricted IFC language with an
appropriate scheduling policy is non-interfering.

\begin{theorem}
  \label{thm:restricted}
  For any target language \ensuremath{\Red{\lambda}} and family of predicates
  $\mathcal{P}$, the restricted IFC language \ensuremath{L_\text{IFC}^{\mathcal{P}}(\textsc{RR},\Red{\lambda})}
  is TSNI.  Furthermore, the IFC language
  \ensuremath{L_\text{IFC}^{\mathcal{P}}(\textsc{Seq},\Red{\lambda})} is TINI.
\end{theorem}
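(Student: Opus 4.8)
The plan is to revisit the erasure-based simulation arguments behind Theorems~\ref{thm:rr-tsni} and~\ref{thm:seq-tini} and check that they survive the addition of the guard predicates. A black-box appeal to those theorems is not available: the restricted language is obtained by \emph{deleting} transitions, and the paper has already stressed that deleting transitions can destroy TSNI. So I would argue white-box, keeping the entire proof skeleton of Theorems~\ref{thm:rr-tsni}/\ref{thm:seq-tini} fixed and isolating the single place where a predicate $P \in \mathcal P$ enters. Note first that, since \textsc{I-noStep} is the only rule left unrestricted by Definition~\ref{def:restricted}, the restricted reduction relation remains total and deterministic: a failed guard merely causes the head task to fall through to \textsc{I-noStep} rather than wedging the system.

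Recall that those proofs proceed by a simulation on erased configurations, with a case split on the label $l_1$ of the task at the head of the list (the only task that can reduce) relative to the observer level $l$. The one new fact I need is \emph{predicate consistency under $l$-equivalence}: if $l_1 \sqsubseteq l$ and $c \approx_l c'$, then $P(\varepsilon_{l_1}(c)) = P(\varepsilon_{l_1}(c'))$. This follows from the erasure-composition identity $\varepsilon_{l_1} = \varepsilon_{l_1}\circ\varepsilon_l$ whenever $l_1 \sqsubseteq l$: a task surviving $\varepsilon_l$ has label $l'' \sqsubseteq l$, and among those $\varepsilon_{l_1}$ retains exactly the ones with $l'' \sqsubseteq l_1$, which (using $l_1 \sqsubseteq l$ to absorb the outer condition) is precisely what $\varepsilon_{l_1}$ retains directly; the same monotonicity applies componentwise to $\varepsilon_l(\Sigma)$ and the message-queue filters $\preceq l$. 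Applying $\varepsilon_{l_1}$ to $\varepsilon_l(c)=\varepsilon_l(c')$ then yields $\varepsilon_{l_1}(c)=\varepsilon_{l_1}(c')$, and $P$ is a function of $\varepsilon_{l_1}(\cdot)$ by Definition~\ref{def:restricted}.

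With this in hand the two cases close. When the head task is \emph{unobservable} ($l_1 \not\sqsubseteq l$), its step is invisible in the original proof because the task is erased and any message it sends carries a label $\sqsupseteq l_1 \not\sqsubseteq l$ that is filtered out of every queue; the guard $P$ only gates \emph{whether} the step occurs, not \emph{what} it does, so $\varepsilon_l$ is still preserved and the step is matched by zero steps of the partner configuration exactly as before. When the head task is \emph{observable} ($l_1 \sqsubseteq l$), predicate consistency shows the restricted rule is enabled on one side iff it is enabled on the $l$-equivalent side, so the matching step supplied by the original argument remains enabled on the partner whenever that argument calls for it. The structural ingredient I would reuse unchanged is \textsc{I-noStep}: a failed guard simply removes the offending task, so under the fair \textsc{RR} scheduler every observable task is still eventually scheduled and no new termination channel is opened. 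For the \textsc{Seq} case the same predicate-consistency lemma feeds the (simpler) TINI simulation, which compares only terminal runs and therefore needs no statement about divergent tasks.

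The main obstacle is precisely the leak the paper warns about: ensuring that removing transitions introduces no progress/termination channel. The delicate sub-case is an observable head task whose guard \emph{fails}; I must verify it fails simultaneously in both $l$-equivalent configurations, so that both uniformly fall through to \textsc{I-noStep} and remove the task, which is exactly what predicate consistency guarantees. Dually, I must verify that a failing guard on an \emph{unobservable} task never blocks the observable tasks, which is where \textsc{RR} fairness together with the \textsc{I-noStep} escape hatch is essential. Once these are settled, no further lemmas are needed and the inductions of Theorems~\ref{thm:rr-tsni} and~\ref{thm:seq-tini} go through verbatim.
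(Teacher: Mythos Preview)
Your proposal is correct and follows essentially the same approach as the paper: the same case split on whether the head task is observable, the same zero-step match in the unobservable case, and the same observation that predicates depending only on the erased configuration evaluate identically on $l$-equivalent states. If anything you are more explicit than the paper in justifying predicate consistency via the composition identity $\varepsilon_{l_1}=\varepsilon_{l_1}\circ\varepsilon_l$ for $l_1\sqsubseteq l$; the paper's proof of its single-step simulation lemma simply asserts that ``those predicates only depend on $\varepsilon_l(c_1)$, which is equivalent to $\varepsilon_l(c_2'')$'' without spelling out the passage from the observer level $l$ to the task's own label. The only framing difference is direction: the paper proves the restricted case (your Theorem~\ref{thm:restricted}) directly and obtains Theorems~\ref{thm:rr-tsni} and~\ref{thm:seq-tini} as the special case of always-true predicates, whereas you phrase it as re-running those proofs with the guards inserted; the underlying argument is the same.
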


In~\appref{sec:single-heap} we give an example how this formalism can be used
to show non-intereference of an implementation of IFC with a single heap.
\section{Real World Languages}
\label{sec:real}

Our approach can be used to retrofit any language for which we
can achieve isolation with information flow control.
Unfortunately, controlling the external effects of a real-world
language, as to achieve isolation, is language-specific and varies
from one language to another.\footnote{
  Though we apply our framework to several real-world languages, it is
  conceivable that there are languages for which isolation cannot be
  easily achieved.
}
Indeed, even for a single language (e.g., JavaScript), how one
achieves isolation may vary according to the language runtime or
embedding (e.g., server and browser).

In this section, we describe several implementations and their
approaches to isolation.
In particular, we describe two JavaScript IFC implementations
building on the theoretical foundations of this work.
Then, we consider how our formalism could be applied to the C
programming language and connect it to a previous IFC system for
Haskell.
%

%% The goal of this section is to show the flexibility of the embedding in settings with
%% vastly different properties and discuss the
%% semantic gap that must be overcome to apply our formalism
%% to other systems.
%
%% Two of the systems we describe have been implemented: the Haskell
%% system~\cite{lio} and the COWL~\cite{swapi} system; we leave the
%% implementation of the IFC system for C to future work.

\subsection{JavaScript}
\label{sec:real:js}

JavaScript, as specified by
ECMAScript~\cite{ecma}, does not have any built-in
functionality for I/O.
%(we denote this language as |targetLangJS|).
%
For this language, which we denote by \ensuremath{\Red{\lambda_{\text{JS}}}}, the IFC system
\ensuremath{L_\text{IFC}(\textsc{RR},\Red{\lambda_{\text{JS}}})} can be implemented by exposing IFC primitives
to JavaScript as part of the runtime, and running multiple instances
of the JavaScript virtual machine in separate OS-level threads.
Unfortunately, this becomes very costly when a system, such as a
server-side web application, relies on many tasks.
%
%% For instance, consider a server-side web application that
%% creates a new task for each HTTP request.  The overhead would
%% quickly become impractical.

Luckily, this issue is not unique to our work---browser layout engines
also rely on isolating code executing in separate iframes (e.g., according to the
same-origin policy).
Since creating an OS thread for each iframe is expensive, both
the V8 and SpiderMonkey JavaScript engines provide means for running
JavaScript code in isolation within a single OS thread,
on disjoint sub-heaps.
In V8, this unit of isolation is called a \emph{context}; in
SpiderMonkey, it is called a \emph{compartment}.
(We will use these terms interchangeably.)
Each context is associated with a global object, which, by
default, implements the JavaScript standard library (e.g.,
\text{\tt Object}, \text{\tt Array}, etc.).
Naturally, we adopt contexts to implement our notion of tasks.

When JavaScript is embedded in browser layout engines,
or in server-side platforms such as Node.js,
additional APIs such as the Document Object Model (DOM) or the file
system get exposed as part of the runtime system.
These features are exposed by extending the global object, just like
the standard library.  For this reason, it is easy to modify
these systems to forbid external effects when implementing
an IFC system, ensuring that important effects can be reintroduced in a safe manner.

\begin{figure}[t]
\centerline{\includegraphics[width=0.7\columnwidth]{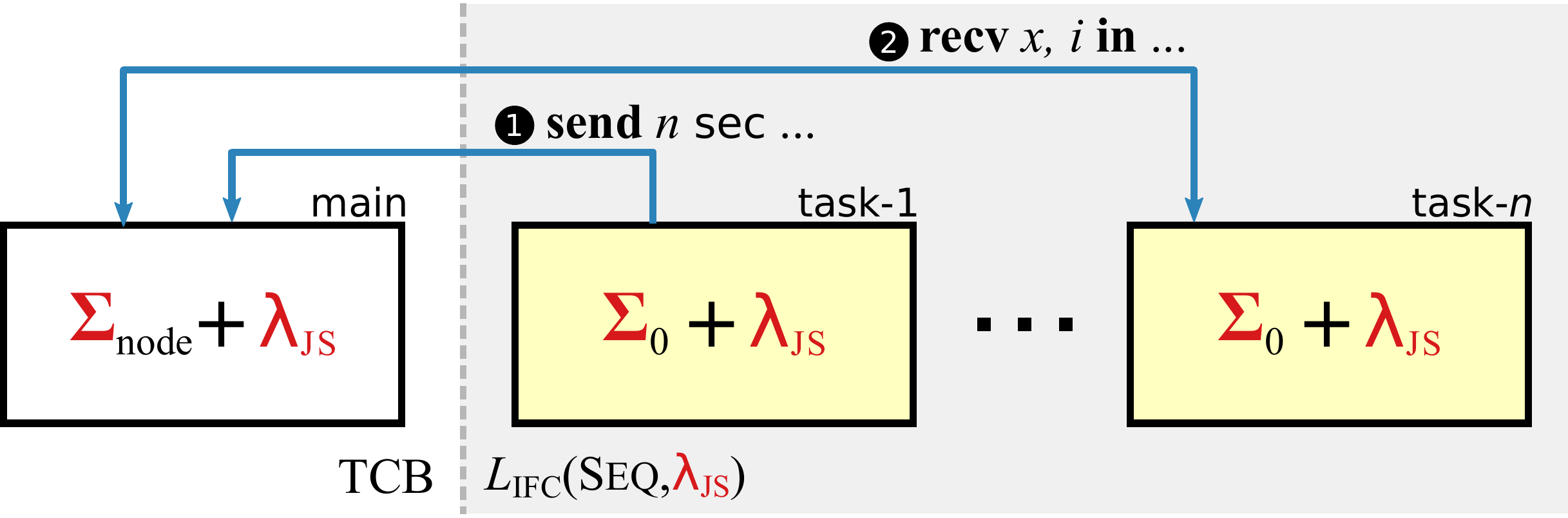}}
\caption{\label{fig:node}
This example shows how our trusted monitor (left) is used to mediate
communication between two tasks for which IFC is enforced (right).}
\end{figure}
\paragraph{Server-side IFC for Node.js:}
We have implemented \ensuremath{L_\text{IFC}(\textsc{Seq},\Red{\lambda_{\text{JS}}})} for Node.js in the form of a
library, without modifying Node.js or the V8 JavaScript engine.
Our implementation\footnote{Available at \codelink{}.} provides a
library for creating new tasks, i.e., contexts whose global object
only contains the standard JavaScript library and our IFC primitives
(e.g., \ensuremath{\mathbf{send}} and \ensuremath{\mathbf{sandbox}}).
When mapped to our formal treatment, \ensuremath{\mathbf{sandbox}} is defined with \ensuremath{\kappa(\tar{\Sigma})\mathrel{=}\tar{\Sigma}_{0}}, where \ensuremath{\tar{\Sigma}_{0}} is the global object corresponding to the
standard JavaScript library and our IFC primitives.
%
%These tasks can communicate and create new tasks, in turn.
%
These IFC operations are mediated by the trusted library code (executing
as the main Node.js context), which tracks the state (current label, messages,
etc.) of each task.  An example for \ensuremath{\mathbf{send}}/\ensuremath{\mathbf{recv}} is shown in
Fig.~\ref{fig:node}.
%While V8 provides a means for sharing any object between contexts (used
%by same-origin contexts in Blink),
Our system conservatively restricts the kinds of messages that can be
exchanged, via \ensuremath{\mathbf{send}} (and \ensuremath{\mathbf{sandbox}}), to string values.
In our formalization, this amounts to restricting the IFC language rule
for \ensuremath{\mathbf{send}} in the following way:
\newcommand{\str}{"string"}
\begin{mathpar}
\inferrule[JS-send]
{
\ensuremath{\ifc{l}\;\flows{}\;\ifc{l}'}\\
\ensuremath{\ifc{\Sigma}\;(\ifc{\Varid{i}}')\mathrel{=}\ifc{\Theta}}\\
\ensuremath{\ifc{\Sigma}'\mathrel{=}\ifc{\Sigma}\;[\mskip1.5mu \ifc{\Varid{i}}'\mapsto{}(\ifc{l}',\ifc{\Varid{i}},\ifc{\Varid{v}}),\ifc{\Theta}\mskip1.5mu]}\\
\ensuremath{\ifc{\Varid{e}}\mathrel{=}\ifc{^{\textrm{IT}}\lceil}\tar{e}\ifc{\rceil}}\\
\ensuremath{\mathcal{E}_{\tar{\Sigma}}\left[\texttt{typeOf}(\tar{e})\texttt{ === \str}\right]\rightarrow\mathcal{E}_{\tar{\Sigma}}\left[\texttt{true}\right]}
}
{\ensuremath{\ifc{\Sigma};\langle \tar{\Sigma}, \ifc{E}\ifc{[}\mathbf{send}\;\ifc{\Varid{i}}'\;\ifc{l}'\;\ifc{\Varid{v}}\ifc{]}_\ifc{I}\rangle^{\ifc{\Varid{i}}}_{\ifc{l}},\ldots\hookrightarrow\ifc{\Sigma}';\alpha_{{\tiny\mathrm{\text{step}}}}(\langle \tar{\Sigma}, \ifc{E}\ifc{[}\langle\rangle\ifc{]}_\ifc{I}\rangle^{\ifc{\Varid{i}}}_{\ifc{l}},\ldots)}}
\end{mathpar}
Of course, we provide a convenience library which marshals JSON
objects to/from strings.
We remark that this is not unlike existing message-passing JavaScript
APIs, e.g., \texttt{postMessage}, which impose similar restrictions as
to avoid sharing references between concurrent code.

While the described system implements \ensuremath{L_\text{IFC}(\textsc{Seq},\Red{\lambda_{\text{JS}}})}, applications
typically require access to libraries (e.g., the file system library
\textsf{fs}) that have external effects.
Exposing the Node.js APIs directly to sandboxed tasks is unsafe.
Instead, we implement libraries (like a labeled version of \textsf{fs}) as
message exchanges between the sandboxed tasks (e.g., \textsf{task-1}
in Fig.~\ref{fig:node}) and the main Node.js task that implements
the IFC monitor.
While this is safer than simply wrapping unsafe objects, which can
potentially be exploited to access objects outside the context (e.g.,
as seen with
\ifextended
ADSafe, FBJS, and Caja~\cite{taly2011automated, maffeis2010object, maffeis2009language}),
\else
ADSafe~\cite{taly2011automated}),
\fi
adding features such as the
\textsf{fs} requires the code in the main task to ensures that labels
are properly propagated and enforced.
Unfortunately, while imposing such a proof burden
is undesirable, this also has to be expected:
different language environments expose different libraries for
handling external I/O, and the correct treatment of external effects
is application specific.
%
%% For instance, in the case of Node.js, how labels interact with the
%% file system may even vary according to the application (e.g.,
%% in the IFC system Hails~\cite{hails}, labels are not persisted).
%
%While we can extend our formalism to model a particular interface to
%the file system, HTTP client, etc. the utility of this formalization is
%unclear and left to future work.
%
We do not extend our formalism to account for the  particular
interface to the file system, HTTP client, etc., as this is
specific to the Node.js implementation and does not generalize
to other systems.

\paragraph{Client-side IFC:}
This work provides the formal basis for the core part of the COWL
client-side JavaScript IFC system~\cite{swapi}.
Like our Node.js implementation, COWL takes a coarse-grained approach
to providing IFC for JavaScript programs.
%
%Unlike our implementation,
However, COWL's IFC monitor is implemented in
the browser layout engine instead (though still leaving the JavaScript engine
unmodified).

Furthermore, COWL repurposes existing contexts (e.g., iframes and
pages) as IFC tasks, only imposing additional constraints on how they
communicate.
As with Node.js, at its core, the global object of a COWL task
should only contain the standard JavaScript libraries and
\texttt{postMessage}, whose semantics are modeled by our
\textsc{JS-send} rule.
However, existing contexts have objects such as the DOM, which
require COWL to restrict a task's external effects.
To this end, COWL mediates any communication (even via the DOM) at
the context boundary. %by COWL.

Simply disallowing all the external effects is overly-restricting for
real-world applications (e.g., pages typically load images, perform
network requests, etc.).
In this light, COWL allows safe network communication by associating an
implicit label with remote hosts (a host's label corresponds to
its origin).
In turn, when a task performs a request, COWL's IFC monitor
ensures that the task label can flow to the remote origin label.
While the external effects of COWL can be formally modeled, we do not
model them in our formalism, %more general framework,
since, like for the
Node.js case, they are specific to this system.
%
%Indeed, even the labeled HTTP client in  Node.js and COWL differ, even
%though they're both JavaScript systems, simply because of the
%differing environments.

\subsection{Haskell}
\label{sec:real:hs}
Our work borrows ideas from the LIO Haskell coarse-grained IFC
system~\cite{lio, stefan:addressing-covert}.
LIO relies on Haskell's type system and monadic encoding of
effects to achieve isolation and define the IFC sub-language.
Specifically, LIO provides the \text{\tt LIO} monad as a way of restricting
(almost all) side-effects.
In the context of our framework, LIO can be understood as follows: the
\emph{pure subset} of Haskell is the target language, while the
monadic subset of Haskell, operating in the \text{\tt LIO} monad, is the
IFC language.

%Indeed, by rephrasing LIO in terms of our framework, we simplified
%LIO's treatment of exceptions.
%
Unlike our proposal, LIO originally associated labels with exceptions, in a
similar style to fine-grained
systems~\cite{stefan:2012:arxiv-flexible, Hritcu:2013:YIB:2497621.2498098}.
In addition to being overly complex, the interaction of exceptions
with clearance (which sets an upper bound on the floating label, see
\appref{sec:clearance}) was incorrect: the clearance
was restored to the clearance at point of the catch.
Furthermore, pure exceptions (e.g., divide by zero) always percolated to
trusted code, effectively allowing for denial of service attacks.
The insights gained when viewing coarse-grained IFC as presented in this
paper led to a much cleaner, simpler treatment of exceptions,
which has now been adopted by LIO.
%By using isolated tasks, and thus treating the current label (and
%clearance) as global, the underlying Haskell exception system can be
%used directly: exceptional control flow does not affect IFC.

\subsection{C}
\label{sec:real:c}
C programs are able to execute arbitrary (machine) code, access
arbitrary memory, and perform arbitrary system calls.
Thus, the confinement of C programs must be imposed by the underlying OS
and hardware.
For instance, our notion of isolation can be achieved using Dune's
hardware protection mechanisms~\cite{Belay:2012:DSU:2387880.2387913},
similar to 
Wedge~\cite{Belay:2012:DSU:2387880.2387913,
Bittau:2008:WSA:1387589.1387611}, but using an information flow control
policy.
Using page tables, a (trusted) IFC runtime could ensure that each task,
implemented as a lightweight process, can only access the memory it
allocates---tasks do not have access to any shared memory.
In addition, ring protection could be used to intercept system
calls performed by
a task and only permit those corresponding to our IFC language (such as
\ensuremath{\mathbf{getLabel}} or \ensuremath{\mathbf{send}}).
Dune's hardware protection mechanism would allow us to provide a concrete
implementation that is efficient and relatively simple to reason
about, but other sandboxing mechanisms could be used in place of Dune.

In this setting, the combined language of Section~\ref{sec:retrofit}
can be interpreted in the following way: calling from the target
language to the IFC language corresponds to invoking a system call.
Creating a new task with the \ensuremath{\mathbf{sandbox}} system call corresponds to
\emph{forking} a process.  Using page tables, we can ensure that
there will be no shared memory
(effectively
defining \ensuremath{\kappa(\tar{\Sigma})\mathrel{=}\tar{\Sigma}_{0}}, where \ensuremath{\tar{\Sigma}_{0}} is the set of pages necessary to bootstrap a
lightweight process).
Similarly, control over page tables and protection bits allows us to
define a \ensuremath{\mathbf{send}} system call that copies pages to our
(trusted) runtime queue; and, correspondingly, a \ensuremath{\mathbf{recv}} that copies
the pages from the runtime queue to the (untrusted) receiver.
Since C is not memory safe, conditions on these system calls are
meaningless.
We leave the implementation of this IFC system for C as future work.

\section{Extensions and Limitations}
\label{sec:extensions}
\label{sec:extensions:labeled}

While the IFC language presented thus far provides the basic
information flow primitives, actual IFC implementations
may wish to extend the minimal system with more specialized
constructs.
For example, COWL provides a labeled version of the XMLHttpRequest (XHR) object,
which is used to make network requests.
%while LIO implements a labeled file system, used in server-side web
%applications~\cite{hails}.
%
Our system can be extended with constructs such as labeled values,
labeled mutable references, clearance, and privileges.
For space reasons, we provide
details of this, including the soundness
proof with the extensions, in \appendixextfirst{}.
Here, we instead discuss a limitation of our formalism: the lack of
external effects.

%\subsection{External Effects}
%\label{sec:extensions:external}
Specifically, our embedding assumes that the target language does not
have any primitives that can induce external effects.
As discussed in Section~\ref{sec:real}, imposing this restriction
can be challenging.
Yet, external effects are crucial when implementing more complex
real-world applications.
For example, code in an IFC browser must load resources or
perform XHR to be useful.

Like labeled references, features with external effects must be
modeled in the IFC language; we must reason about the precise security
implications of features that otherwise inherently leak data.
Previous approaches have modeled external effects by internalizing the
effects as operations on labeled channels/references~\cite{stefan:addressing-covert}.
Alternatively, it is possible to model such effects as messages to/from
certain labeled tasks, an approach taken by our Node.js
implementation.
These ``special'' tasks are trusted with access to the unlabeled
primitives that can be used to perform the external effects; since the
interface to these tasks is already part of the IFC language, the
proof only requires showing that this task does not leak information.
Instead of restricting or wrapping unsafe primitives,
COWL allow for
controlled network communication at the context boundary.
(By restricting the default XHR object, for example, COWL allows code
to communicate with hosts according to the task's current label.)
\section{Related Work}
\label{sec:related}

% Local Variables:
% TeX-master: "main.lhs.tex"
% TeX-command-default: "Make"
% End:

%Our paper follows a long tradition of papers on information flow
%control.
Our information flow control system is closely related
to the coarse-grained information systems used in operating systems such
as Asbestos~\cite{efstathopoulos:asbestos},
HiStar~\cite{Zeldovich:2006}, and Flume~\cite{krohn:flume}, as well as language-based
\emph{floating-label IFC systems} such as LIO~\cite{lio},
and Breeze~\cite{Hritcu:2013:YIB:2497621.2498098}, where there is a
monotonically increased label
associated with threads of execution.
Our treatment of termination-sensitive and termination-insensitive interference
originates from Smith and Volpano~\cite{Smith:Volpano:MultiThreaded,Volpano:1997:ECF:794197.795081}.
%While this work often considers the interaction of information flow
%control with respect to various language features (e.g.,
%exceptions~\cite{Hritcu:2013:YIB:2497621.2498098}), none of these
%systems attempt to be generic in the choice of target language.

One information flow control technique designed to handle legacy code is
secure multi-execution (SME)~\cite{Devriese:2010,Rafnson:2013}. SME runs
multiple copies of the program, one per security level, where the semantics of
I/O interactions is altered.
Bielova et
al.~\cite{Biel-etal-11-TR} use a transition system to describe SME, where the
details of the underlying language are hidden.  Zanarini et
al.~\cite{ZanariniJR13} propose a novel semantics for programs based on
interaction trees~\cite{jacobs-tutorial}, which treats programs as black-boxes
about which nothing is known, except what can be inferred from their interaction
with the environment. Similar to SME, our approach mediates I/O
operations; however, our approach only runs the program once.

One of the primary motivations behind this paper is the application of
information flow control to JavaScript.  Previous systems 
retrofitted JavaScript with fine-grained
\ifextended
IFC~\cite{Hedin:2012,ConDOM,JSFlow}.
\else
IFC~\cite{Hedin:2012,JSFlow}.
\fi
While fine-grained IFC can result
in fewer false alarms and target legacy code, it comes at the
cost of complexity: the system must accommodate the entirety of JavaScript's
semantics~\cite{JSFlow}.
%Moreover, the runtime performance cost are sometimes too high, e.g.,
%higher than 100\%~\cite{JSFlow}.
By contrast, coarse-grained
approaches to security
tend to have simpler implications~\cite{Yip:2009:PBS,DeGroef:2012}.

%There are attempts to overcome this performance barrier by various means:
%Austin and Flanagan~\cite{Austin:Flanagan:PLAS09} handle this problem
%by using a \emph{sparse information labeling} scheme which reduces the
%cost of managing labels when they are not needed, while FlowFox~\cite{DeGroef:2012}
%uses an entirely different technique of \emph{secure multi-execution} to
%enforce non-interference. \Red{Probably more stuff here.}
%\todo{ds}{SME works for arbitrary language (kind of).}

The constructs in our IFC language, as well as the behavior of
inter-task communication, are reminiscent of distributed systems like
Erlang~\cite{Armstrong03makingreliable}.
In distributed systems, isolation is required due to
physical constraints; in information flow control, isolation is
required to enforce non-interference.  Papagiannis et
al.~\cite{Papagiannis_enforcinguser} built an information flow control
system on top of Erlang that shares some similarities to ours.
However, they do not take a floating-label approach (processes can find
out when sending a message failed due to a forbidden information
flow), nor do they provide security proofs.

There is limited work on general techniques for retrofitting
arbitrary languages with information flow control. However, one time-honored
technique is to define a fundamental calculus for which other languages can be
desugared into.  Abadi et al.~\cite{abadi+:core} motivate their core calculus of
dependency by showing how various previous systems can be encoded in it. Tse and
Zdancewic~\cite{Tse:Zdancewic:ICFP04}, in turn, show how this calculus can be
encoded in System F via parametricity.  Broberg and Sands~\cite{Broberg:2010}
encode several IFC systems into Paralocks.  However, this line of work is
primarily focused on static enforcements.
%information flow control systems.

%Our definition of isomorphism of information flow control languages in
%Section~\ref{sec:concrete} is closely related to the idea of
%\emph{bisimulation}~\cite{Milner:1989:CC:534666} and process calculi,
%although as our system is deterministic, we do not use the theory in any
%deep way.

%   It has long been observed that information flow control and monads are
%   closely related.  This observation has been used by Abadi et
%   al.~\cite{abadi+:core} to structure access to private data by protecting
%   it using a lattice of monads, one per security level. Russo et. al.~\cite{Russo+:Haskell08} extend this
%   idea to languages with side-effects. Our system uses
%   monads in a conceptually different manner, where there is a single monad
%   representing the information flow control computational effects; this
%   is inline with the use of monads in other systems~\cite{Harrison05,lio,Devriese:2011}.
%   Devriese and Piessens~\cite{Devriese:2011} utilize monad
%   transformers as an essential component in their work.  However, their use
%   of monad transformers is to explain when extra constraints on a non-information
%   flow control aware base monad should be applied (i.e., during lifting).
%   Harrison and Hook~\cite{Harrison05} define a notion of \emph{atomic noninterference},
%   which is preserved through monad transformer application.  However, it is not
%   clear what is the relationship between this notion and traditional noninterference,
%   and their paper only proves a weaker property of no write down.

\cut{
\begin{itemize}
    \item Monads and IFC (Abadi~\cite{Abadi+:Core}, Tse~\cite{Tse:Zdancewic:ICFP04}, Harrison-Hook~\cite{Harrison05}, Crary~\cite{Crary:2005}, Devriese-Piessens~\cite{Devriese:2011})
    \item \Red{Operating systems approaches to IFC, which use coarse grained (Asbestos~\cite{efstathopoulos:asbestos}, HiStar~\cite{Zeldovich:2006})}
    \item \Red{Floating-label IFC systems, look at LIO paper (LIO~\cite{lio}, Breeze~\cite{Hritcu:2013:YIB:2497621.2498098})}
    \item JavaScript IFC related work, look at BrowBound. (ADSafe, xBook \Red{needs lookup})  Classify approaches into fine-grained (Hedin-Sabelfield~\cite{Hedin:2012}+JSFlow~\cite{JSFlow}, FlowFox~\cite{DeGroef:2012}, ConDOM~\cite{ConDOM}, Austin-Flanagan (sparse information labeling)~\cite{Austin:Flanagan:PLAS09}) or coarse-grained (BFlow~\cite{Yip:2009:PBS}, see Deian)
\item Secure multi-execution (FlowFox~\cite{DeGroef:2012}, Russo~\cite{Jaskelioff:SME})
\item Distributed programming languages (Cloud Haskell) plus IFC (Erlang (they check IFC when sending messages, no security proofs, assumes true isolation, not a floating labels)~\cite{Papagiannis_enforcinguser})
\end{itemize}

\Red{
Monads and IFC:

Basically, monads and IFC go together like toads and holes, or pigs and blankets, etc.  But the role the monad plays varies widely.  First, you have static systems (Abadi, Tse-Zdancewic), which utilize a lattice of monads: if a value is in a monad with a label that's too high, you're not allowed to unpeel it and look at the value inside.  Next, you have dynamic systems.  It seems pretty clear that making the IFC TCB be in a monad is a good idea, and the cluster of papers around LIO are all about that. Some of these papers talk about monad transformers: Devriese/Piessens focuses on how you might implement something like LIO by applying a monad transformer to some base monad to run the appropriate restrictions; Harrison/Hook do some more close to what we're doing, where they actually try to say something like "transformers do not interfere." But I can't tell if they're actually doing what we're doing, and their work has a number of unrelated technical restrictions.

In more detail:

Information Flow Enforcement in Monadic Libraries (Devriese/Piessens): Talks about how to use a monad transformer to take a non-IFC base monad and turn it into a IFC monad, namely the lifting operation should enforce extra constraints.  We take closely related ideas, and apply it to settings where there are not any monads.  Devriese/Piessens doesn't note the idea that untrusted monad transformers can be applied to add extra effects.

Achieving information flow security through monadic control of effects (Harrison/Hook): separation kernel is the basic idea behind our constructed (everything is partioned).  Notion of atomic noninterference between layers of effects, ``sufficient condition for atomic noninterference to be inherited through monad transformer application''.  But atomic noninterference is not proper noninterference as we've defined here! No proof of separation of kernels, instead prove weaker property no write down.

A Core Calculus of Dependency (Abadi et al): define a calculus, which calculi you want to prove IFC secure can be translated into.  It's a static system.  Why do they use monads?  Something about that
}

IFC for any language:

Gotta talk about secure multi-execution
}
%\cut{
\section{Conclusion}
\label{sec:conclusion}

In this paper, we argued that when designing a coarse-grained IFC
system, it is better to start with a fully isolated, multi-task system
and work one's way back to the model of a single language equipped
with IFC.  We showed
how systems designed this way can be proved non-interferent
without needing to rely on details of the target language, and
we provided conditions on how to securely refine our formal semantics to
consider optimizations required in practice.  We connected our semantics
to two IFC implementations for JavaScript based on this formalism,
explained how our methodology improved an exiting IFC system for Haskell,
and proposed
an IFC system for C using hardware isolation.
By systematically applying ideas from IFC in operating systems to
programming languages for which isolation can be achieved,
we hope to have elucidated some of the core design principles of coarse-grained,
dynamic IFC systems.

% In this paper, we develop a general coarse-grained approach to IFC
% based on dividing a system into relatively course computational units
% and tracking only communication between these isolated units.
% We give formal semantics for the core coarse-grained
% information flow control language and show how a large class of target languages
% can be combined with it to achieve provable
% non-interference.
% We then give a proof technique for showing  non-interference
% of a concrete semantics for a potentially optimized IFC language
% by means of an isomorphism, and identify a class of IFC language restrictions
% that preserves non-interference.
% We briefly describe ways to enrich the core IFC language with
% more advanced concepts such as labeled values, privileges, or a
% notation of clearance and
% connect our formal semantics to real implementations of
% coarse-grained IFC systems for Javascript, Haskell, and
% (in less detail) enforcing IFC in C.
% Leaving more detailed presentation to later work,
% we informally relate our results to a monadic interpretation
% of IFC which guided our design.
% }

{
\small
\paragraph{Acknowledgements}
We thank the POST 2015 anonymous reviewers,
Adriaan Larmuseau,
Sergio Maffeis, and
David Mazi\`eres
for useful comments and suggestions.
This work was funded by DARPA CRASH under contract \#N66001-10-2-4088,
by the NSF, by the AFOSR,
by multiple gifts from Google, by a gift from Mozilla,
and by the Swedish research agencies VR and the Barbro Oshers Pro
Suecia Foundation.
Deian Stefan and Edward Z. Yang were supported by the DoD through the
NDSEG.
}

{
\ifextended%
\else%
\frenchspacing\scriptsize
\setlength{\bibsep}{2pt}
\fi%
  \bibliographystyle{abbrvnat}
  \bibliography{local}
}

\ifextended
\clearpage
\balance
\appendix

\section{Full Semantics for \ensuremath{\Red{\lambda_{\text{ES}}}}}
\label{sec:app:semantics}

In Fig.~\ref{fig:ml-full} we give the full semantics for \ensuremath{\Red{\lambda_{\text{ES}}}}.  A subset
of them has been given in Fig.~\ref{fig:ml} earlier in the paper.

\begin{figure}
\begin{hscode}\SaveRestoreHook
\column{B}{@{}>{\hspre}l<{\hspost}@{}}%
\column{6}{@{}>{\hspre}l<{\hspost}@{}}%
\column{22}{@{}>{\hspre}l<{\hspost}@{}}%
\column{40}{@{}>{\hspre}l<{\hspost}@{}}%
\column{47}{@{}>{\hspre}l<{\hspost}@{}}%
\column{E}{@{}>{\hspre}l<{\hspost}@{}}%
\>[B]{}\tar{v}{}\<[6]%
\>[6]{}\Coloneqq\lambda \tar{x}.\tar{e}\mid \mathbf{true}\mid \mathbf{false}\mid \tar{a}{}\<[E]%
\\
\>[B]{}\tar{e}{}\<[6]%
\>[6]{}\Coloneqq\tar{v}\mid \tar{x}\mid \tar{e}\;\tar{e}\mid \mathbf{if}\;\tar{e}\;\mathbf{then}\;\tar{e}\;\mathbf{else}\;\tar{e}\mid \mathbf{ref}\;\tar{e}\mid \mathbin{!}\tar{e}\mid \tar{e}\mathbin{:=}\tar{e}\mid \mathbf{fix}\;\tar{e}{}\<[E]%
\\
\>[B]{}\tar{E}{}\<[6]%
\>[6]{}\Coloneqq\tar{[\cdot ]_T}\mid \tar{E}\;\tar{e}\mid \tar{v}\;\tar{E}\mid \mathbf{if}\;\tar{E}\;\mathbf{then}\;\tar{e}\;\mathbf{else}\;\tar{e}\mid \mathbf{ref}\;\tar{E}\mid \mathbin{!}\tar{E}\mid \tar{E}\mathbin{:=}\tar{e}\mid \tar{v}\mathbin{:=}\tar{E}\mid \mathbf{fix}\;\tar{E}{}\<[E]%
\\
\>[B]{}\tar{e}_{1};\tar{e}_{2}{}\<[22]%
\>[22]{}\triangleq{}(\lambda \tar{x}.\tar{e}_{2})\;\tar{e}_{1}\;{}\<[40]%
\>[40]{}\mathbf{where}\;{}\<[47]%
\>[47]{}\tar{x}\;\not\in\;\mathcal{FV}\;(\tar{e}_{2}){}\<[E]%
\\
\>[B]{}\mathbf{let}\;\tar{x}\mathrel{=}\tar{e}_{1}\;\mathbf{in}\;\tar{e}_{2}{}\<[22]%
\>[22]{}\triangleq{}(\lambda \tar{x}.\tar{e}_{2})\;\tar{e}_{1}{}\<[E]%
\ColumnHook
\end{hscode}\resethooks
\begin{mathpar}

\inferrule[T-app]
{ } {\ensuremath{\mathcal{E}_{\tar{\Sigma}}\left[(\lambda \Varid{x}.\tar{e})\;\tar{v}\right]\rightarrow\mathcal{E}_{\tar{\Sigma}}\left[\{\mskip1.5mu \tar{v}\mathbin{/}\Varid{x}\mskip1.5mu\}\;\tar{e}\right]}}

\and
\inferrule[T-ifTrue]
{ } {\ensuremath{\mathcal{E}_{\tar{\Sigma}}\left[\;\mathbf{if}\;\mathbf{true}\;\mathbf{then}\;\tar{e}_{1}\;\mathbf{else}\;\tar{e}_{2}\right]\rightarrow\mathcal{E}_{\tar{\Sigma}}\left[\tar{e}_{1}\right]}}

\and
\inferrule[T-ifFalse]
{ } {\ensuremath{\mathcal{E}_{\tar{\Sigma}}\left[\;\mathbf{if}\;\mathbf{false}\;\mathbf{then}\;\tar{e}_{1}\;\mathbf{else}\;\tar{e}_{2}\right]\rightarrow\mathcal{E}_{\tar{\Sigma}}\left[\tar{e}_{2}\right]}}

\and
\inferrule[T-ref]
{ \ensuremath{\textrm{fresh}(\tar{a})} }
{\ensuremath{\mathcal{E}_{\tar{\Sigma}}\left[\mathbf{ref}\;\tar{v}\right]\rightarrow\mathcal{E}_{\tar{\Sigma}\left[\tar{a}\mapsto{}\tar{v}\right]}\left[\tar{a}\right]}}

\and
\inferrule[T-deref]
{ \ensuremath{(\tar{a},\tar{v})\in\tar{\Sigma}} }
{\ensuremath{\mathcal{E}_{\tar{\Sigma}}\left[\mathbin{!}\tar{a}\right]\rightarrow\mathcal{E}_{\tar{\Sigma}}\left[\tar{v}\right]}}

\and
\inferrule[T-ass]
{ }
{\ensuremath{\mathcal{E}_{\tar{\Sigma}}\left[\tar{a}\mathbin{:=}\tar{v}\right]\rightarrow\mathcal{E}_{\tar{\Sigma}\left[\tar{a}\mapsto{}\tar{v}\right]}\left[\tar{v}\right]}}

\and
\inferrule[T-fix]
{ }
{\ensuremath{\mathcal{E}_{\tar{\Sigma}}\left[\mathbf{fix}\;(\lambda \Varid{x}.\Varid{e})\right]\rightarrow\mathcal{E}_{\tar{\Sigma}}\left[\{\mskip1.5mu \mathbf{fix}\;(\lambda \Varid{x}.\Varid{e})\mathbin{/}\Varid{x}\mskip1.5mu\}\;\Varid{e}\right]}}

\end{mathpar}

\caption{\ensuremath{\Red{\lambda_{\text{ES}}}}: simple untyped lambda calculus extended with booleans,
mutable references and general recursion. \ensuremath{\mathcal{FV}\;(\tar{e})} returns the set of free
variables in expression \ensuremath{\tar{e}}.}
\label{fig:ml-full}
\end{figure}

\section{Example IFC Language with a Single Heap}
\label{sec:single-heap}

As a concrete instantiation of this proof technique, we show
how to make implement our IFC language using a single heap and
ensure its non-interference using the techniques presented.
First, we can construct the restricted language
\ensuremath{L_\text{IFC}^{\mathcal{P}_\text{norefs}}(\alpha,\Red{\lambda_{\text{ES}}})}, where \ensuremath{\mathcal{P}_\text{norefs}} is
the family of always valid predicates, except for the ones for
\textsc{I-sandbox} and \textsc{I-send}, which we define as
$ P(\ensuremath{\ifc{\Varid{e}}}) = (\mathcal{AV}(\ensuremath{\ifc{\Varid{e}}}) = \emptyset{}) $
where $\mathcal{AV}(\ensuremath{\ifc{\Varid{e}}})$ denotes the set of address variables in \ensuremath{\ifc{\Varid{e}}}.
That is, we do not restrict any rules except for \textsc{I-sandbox}
and \textsc{I-send}.
Since $P$ only depends on \ensuremath{\ifc{\Varid{e}}}, which is part of the current
task and thus never erased w.r.t.\ the label of the first task,
this language satisfies non-interference by Theorem~\ref{thm:restricted}.

\begin{figure}
  
  \begin{mathpar}
    \inferrule[C-sandbox]
    {
      \mathcal{AV}(\ensuremath{\ifc{\Varid{e}}}) = \emptyset{}\\
      \ensuremath{\ifc{\Sigma}'\mathrel{=}\ifc{\Sigma}\left[\ifc{\Varid{i}}'\mapsto{}\mathbf{nil}\right]}\\
      \ensuremath{\ifc{\Varid{t}}_{1}\mathrel{=}\langle \ifc{E}\ifc{[}\ifc{\Varid{i}}'\ifc{]}\rangle^{\ifc{\Varid{i}}}_{\ifc{l}}}\\
      \ensuremath{\ifc{\Varid{t}}_\textrm{new}\mathrel{=}\langle \tar{_{\textrm{TI}}\lfloor}\ifc{\Varid{e}}\tar{\rfloor}\rangle^{\ifc{\Varid{i}}'}_{\ifc{l}}}\\
      \ensuremath{\textrm{fresh}(\ifc{\Varid{i}}')}
    }
    {\ensuremath{\ifc{\Sigma};\tar{\Sigma};\langle \ifc{E}\ifc{[}\mathbf{sandbox}\;\ifc{\Varid{e}}\ifc{]}_\ifc{I}\rangle^{\ifc{\Varid{i}}_{1}}_{\ifc{l}_{1}},\ldots\hookrightarrow\ifc{\Sigma}';\tar{\Sigma};\alpha_{{\tiny\mathrm{\text{sandbox}}}}(\ifc{\Varid{t}}_{1},\ldots,\ifc{\Varid{t}}_\textrm{new})}}
    \and
    \inferrule[C-send]
    {
      \mathcal{AV}(\ensuremath{\ifc{\Varid{e}}}) = \emptyset{}\\
      \ensuremath{\ifc{l}\flows{}\ifc{l}'}\\
      \ensuremath{\ifc{\Sigma}\;(\ifc{\Varid{i}}')\mathrel{=}\ifc{\Theta}}\\
      \ensuremath{\ifc{\Sigma}'\mathrel{=}\ifc{\Sigma}\left[\ifc{\Varid{i}}'\mapsto{}(\ifc{l}',\ifc{\Varid{i}},\ifc{\Varid{v}}),\ifc{\Theta}\right]}
    }
    {\ensuremath{\ifc{\Sigma};\tar{\Sigma};\langle \ifc{E}\ifc{[}\mathbf{send}\;\ifc{\Varid{i}}'\;\ifc{l}'\;\ifc{\Varid{v}}\ifc{]}_\ifc{I}\rangle^{\ifc{\Varid{i}}}_{\ifc{l}},\ldots\rightarrow\ifc{\Sigma};\tar{\Sigma};\alpha_{{\tiny\mathrm{\text{step}}}}(\langle \langle\rangle\rangle^{\ifc{\Varid{i}}}_{\ifc{l}},\ldots)}}
  \end{mathpar}
  
  \caption{A selection of the reduction rules for \ensuremath{L_\text{IFC}^{\Red{\text{Heap}}}(\alpha)}.}
  \label{fig:concrete}
\end{figure}

The essential parts of the semantics for the concrete language
with a single heap,
which we call \ensuremath{L_\text{IFC}^{\Red{\text{Heap}}}(\alpha)},
are given in Fig.~\ref{fig:concrete}.  Most rules are
straight-forward translations of the rules in Figs.~\ref{fig:ifc}
and~\ref{fig:embedding} but for a single heap.  For conciseness, we
only show the interesting ones.
Now, we can show an isomorphism between this language and
\ensuremath{L_\text{IFC}^{\mathcal{P}_\text{norefs}}(\alpha,\Red{\lambda_{\text{ES}}})}, which
(by Theorem~\ref{thm:iso-tsni} and~\ref{thm:iso-tini}) guarantees
non-interference for an appropriate scheduling policy \ensuremath{\alpha}.

To this end, we represent addresses in the concrete language as
pairs $(\ensuremath{\ifc{\Varid{i}}},\ensuremath{\tar{a}})$ where \ensuremath{\ifc{\Varid{i}}} is a task identifier, and \ensuremath{\tar{a}} an
address in the abstract system\footnote{Note that this does
  not make the isomorphism trivial, as in the single heap, there
  is nothing preventing task 1 to access an address (2,\ensuremath{\tar{a}}).
  Furthermore, it is common to represent addresses in this way
  for efficient garbage collection of dead tasks.}.
We also formulate the following well-formedness condition for
configurations:
\[
  \ensuremath{\operatorname{wf}(c)} = \forall \ensuremath{\langle \ifc{\Varid{e}}\rangle^{\ifc{\Varid{i}}}_{\ifc{l}}} \in \ensuremath{\ifc{c}}.\ 
  \{ (\ensuremath{\ifc{\Varid{i}}'},\ensuremath{\ifc{\Varid{e}}'}) \in \mathcal{AV}(\ensuremath{\ifc{\Varid{e}}}) \ \vert\ \ensuremath{\ifc{\Varid{i}}} \neq \ensuremath{\ifc{\Varid{i}}'} \} = \emptyset
\]
Essentially, every address in a given task must have the correct
identifier as the first part of the address.  It is easy to
see that the initial configuration satisfies this condition, and
any step in the concrete semantics preserves the condition.
Therefore, we only need to consider well-formed configurations,
which allows us to give the two required functions
\ensuremath{\Varid{f}} and \ensuremath{\Varid{f}^{-1}} for the isomorphism.  For conciseness, we only give
the interesting parts of their definition,
and leave out the straight-forward proof that they
actually provide an isomorphism.
\begin{itemize}
  \item Addresses can be directly translated with
  $\ensuremath{\Varid{f}}((\ensuremath{\ifc{\Varid{i}}},\ensuremath{\tar{a}}))=\ensuremath{\tar{a}}$, and $\ensuremath{\Varid{f}^{-1}}(\ensuremath{\tar{a}})=\ensuremath{(\ifc{\Varid{i}},\tar{a})}$ for
  an address \ensuremath{\tar{a}} that occurs in task \ensuremath{\ifc{\Varid{i}}}.
  \item \ensuremath{\Varid{f}} splits the single heap into multiple heaps based on
  the \ensuremath{\ifc{\Varid{i}}} of the addresses.  \ensuremath{\Varid{f}^{-1}} produces a single heap
  by translating the addresses and collapsing everything to a single
  store.
\end{itemize}

\section{Extending the Core Calculus}
\label{sec:appendix-extensions}

As mentioned in the main body of this paper,
actual IFC implementations
may wish to extend the minimal system with more specialized
constructs.
In this section we show how to extend the language with several such
constructs.
%
%We remark that that the non-interference proofs must account for the
%new constructs, since they amend the IFC calculus semantics (usually
%by addition).

\subsection{Labeled values}
In traditional language-based dynamic IFC systems, a label is
associated with values.
Hence, a program that, for example, simply writes labeled messages to
a labeled log can operate on both public and sensitive values.
Similarly, a task that receives a sensitive value and forwards it
to another task does not have be be at a sensitive level, if the
value is not inspected.
In its simplest form, our coarse grained system requires that the
current label of a task be at least at the level of the sensitive data
to reflect the fact that such data is in scope.

If such fine-grained labeling of values is required, our base IFC
system can be extended with explicitly labeled
values, much like those of LIO and
Breeze~\cite{lio, Hritcu:2013:YIB:2497621.2498098}: \ensuremath{\ifc{\Varid{v}}\Coloneqq\cdots\;|\;\mathbf{Labeled}\;\ifc{l}\;\ifc{\Varid{e}}}.
Following LIO, we say that the expression \ensuremath{\ifc{\Varid{e}}} is protected by label \ensuremath{\ifc{l}},
while the label \ensuremath{\ifc{l}} itself is protected by the task's current label.
The label of such values can be inspected the task without
requiring the current label to be raised.
However, when a task wishes to inspect the protected value \ensuremath{\ifc{\Varid{e}}}, it
must first raise its label to at least \ensuremath{\ifc{l}} to reflect that it is
incorporating data at such sensitivity level in its scope.
When creating labeled values the label \ensuremath{\ifc{l}} must be above
the current label; otherwise it cannot be said that protection has
been transferred from the current label to \ensuremath{\ifc{l}}.

In Fig.~\ref{fig:labeled-vals}, we formally show how to add this
extension to the language.
We assume that the constructor \ensuremath{\mathbf{Labeled}} is not part
of the surface syntax, but rather an internal construct.

\begin{figure}
        \begin{hscode}\SaveRestoreHook
\column{B}{@{}>{\hspre}l<{\hspost}@{}}%
\column{5}{@{}>{\hspre}l<{\hspost}@{}}%
\column{9}{@{}>{\hspre}l<{\hspost}@{}}%
\column{E}{@{}>{\hspre}l<{\hspost}@{}}%
\>[5]{}\ifc{\Varid{v}}{}\<[9]%
\>[9]{}\Coloneqq\cdots\mid \mathbf{Labeled}\;\ifc{l}\;\ifc{\Varid{e}}{}\<[E]%
\\
\>[5]{}\ifc{\Varid{e}}{}\<[9]%
\>[9]{}\Coloneqq\cdots\mid \mathbf{label}\;\ifc{\Varid{e}}\;\ifc{\Varid{e}}\mid \mathbf{unlabel}\;\ifc{\Varid{e}}\mid \mathbf{labelOf}\;\ifc{\Varid{e}}{}\<[E]%
\\
\>[5]{}\ifc{E}{}\<[9]%
\>[9]{}\Coloneqq\cdots\mid \mathbf{label}\;\ifc{E}\;\ifc{\Varid{e}}\mid \mathbf{unlabel}\;\ifc{E}\mid \mathbf{labelOf}\;\ifc{E}{}\<[E]%
\ColumnHook
\end{hscode}\resethooks
  \begin{mathpar}
    \inferrule[I-label]
    {\ensuremath{\ifc{l}\;\flows{}\;\ifc{l}'}}
    {\ensuremath{\mathcal{E}_{\ifc{\Sigma}}^{\ifc{\Varid{i}},\ifc{l}}\left[\mathbf{label}\;\ifc{l}'\;\ifc{\Varid{e}}\right]\rightarrow\mathcal{E}_{\ifc{\Sigma}}^{\ifc{\Varid{i}},\ifc{l}}\left[\mathbf{Labeled}\;\ifc{l}'\;\ifc{\Varid{e}}\right]}}
    \and
    \inferrule[I-unlabel]
    {}
    {\ensuremath{\mathcal{E}_{\ifc{\Sigma}}^{\ifc{\Varid{i}},\ifc{l}}\left[\mathbf{unlabel}\;(\mathbf{Labeled}\;\ifc{l}'\;\ifc{\Varid{e}})\right]\rightarrow\mathcal{E}_{\ifc{\Sigma}}^{\ifc{\Varid{i}},\ifc{l}\;\lub\;\ifc{l}'}\left[\ifc{\Varid{e}}\right]}}
    \and
    \inferrule[I-labelOf]
    {\ensuremath{}}
    {\ensuremath{\mathcal{E}_{\ifc{\Sigma}}^{\ifc{\Varid{i}},\ifc{l}}\left[\mathbf{labelOf}\;(\mathbf{Labeled}\;\ifc{l}'\;\ifc{\Varid{e}})\right]\rightarrow\mathcal{E}_{\ifc{\Sigma}}^{\ifc{\Varid{i}},\ifc{l}}\left[\ifc{l}'\right]}}
  \end{mathpar}
  \caption{Syntax and semantics for labeled values.  These rules are
    understood to be an addition to the existing rules given earlier.}
  \label{fig:labeled-vals}
  \end{figure}

\subsection{Labeled mutable references/variables/channels}
Extending the calculus with other labeled features, such as
references, mutable variables (MVars)~\cite{CH96}, or channels,
can be done in a similar manner: these references are implemented
in the IFC language, separately from any preexisting notions of
mutable references in the target language.
%
%While we can modify the IFC calculus to add such labeled state to each
%individual task, adding it to the system state |iS| leads to more
%interesting use cases.
%
There is some minor additional state to track: specifically, by amending \ensuremath{\ifc{\Sigma}}, as in~\cite{lio,
stefan:addressing-covert}, we can allow threads to use these
constructs to synchronize, or communicate with constructs other than
\ensuremath{\mathbf{send}}/\ensuremath{\mathbf{recv}} in a safe manner.
For example, when extending the calculus with labeled references, \ensuremath{\ifc{\Sigma}}
additionally contains a store that maps addresses to a value
and a label
which can be read and written to by different tasks through a labeled
reference implementations.

In Fig.~\ref{fig:labeled-refs} details labeled references formally.
The construct \ensuremath{\ifc{\Varid{a}}_{\ifc{l}}} is internal in the labeled reference implementation,
and not part of the surface syntax.
The changes to the language for labeled values and
references require us to update the erasure function
\ensuremath{\varepsilon_{\ifc{l}}}, whose full definition is shown in Fig.~\ref{fig:erasure2}.

\begin{figure}
        \begin{hscode}\SaveRestoreHook
\column{B}{@{}>{\hspre}l<{\hspost}@{}}%
\column{5}{@{}>{\hspre}l<{\hspost}@{}}%
\column{9}{@{}>{\hspre}l<{\hspost}@{}}%
\column{14}{@{}>{\hspre}l<{\hspost}@{}}%
\column{19}{@{}>{\hspre}l<{\hspost}@{}}%
\column{E}{@{}>{\hspre}l<{\hspost}@{}}%
\>[5]{}\ifc{\Varid{v}}{}\<[9]%
\>[9]{}\Coloneqq\cdots\mid \ifc{\Varid{a}}_{\ifc{l}}{}\<[E]%
\\
\>[5]{}\ifc{\Varid{e}}{}\<[9]%
\>[9]{}\Coloneqq\cdots\mid \mathbf{new}\;\ifc{\Varid{e}}\;\ifc{\Varid{e}}\mid \mathbf{read}\;\ifc{\Varid{e}}\mid \mathbf{write}\;\ifc{\Varid{e}}\;\ifc{\Varid{e}}{}\<[E]%
\\
\>[5]{}\ifc{E}{}\<[9]%
\>[9]{}\Coloneqq\cdots\mid \mathbf{new}\;\ifc{E}\;\ifc{\Varid{e}}\mid \mathbf{new}\;\ifc{l}\;\ifc{E}\mid \mathbf{read}\;E{}\<[E]%
\\
\>[9]{}\hsindent{10}{}\<[19]%
\>[19]{}\mid \mathbf{write}\;\ifc{E}\;\ifc{\Varid{e}}\mid \mathbf{write}\;\ifc{\Varid{a}}_{\ifc{l}}\;\ifc{E}{}\<[E]%
\\
\>[5]{}\ifc{\Sigma}{}\<[9]%
\>[9]{}\Coloneqq{}\<[14]%
\>[14]{}\cdots\mid \ifc{\Sigma}\left[\ifc{\Varid{a}}_{\ifc{l}}\mapsto{}\ifc{\Varid{v}}\right]{}\<[E]%
\ColumnHook
\end{hscode}\resethooks
  
  \begin{mathpar}
    \inferrule[I-new]
    {
      \ensuremath{\ifc{l}\;\flows{}\;\ifc{l}'} \\
      \ensuremath{\textrm{fresh}(\ifc{\Varid{a}})}\\
      \ensuremath{\ifc{\Sigma}'\mathrel{=}\ifc{\Sigma}\left[\ifc{\Varid{a}}_{\ifc{l}'}\mapsto{}\ifc{\Varid{v}}\right]}
    }
    {\ensuremath{\mathcal{E}_{\ifc{\Sigma}}^{\ifc{\Varid{i}},\ifc{l}}\left[\mathbf{new}\;\ifc{l}'\;\ifc{\Varid{v}}\right]\rightarrow\mathcal{E}_{\ifc{\Sigma}'}^{\ifc{\Varid{i}},\ifc{l}}\left[\ifc{\Varid{a}}_{\ifc{l}'}\right]}}
    \and
    \inferrule[I-read]
    {}
    {\ensuremath{\mathcal{E}_{\ifc{\Sigma}}^{\ifc{\Varid{i}},\ifc{l}}\left[\mathbf{read}\;\ifc{\Varid{a}}_{\ifc{l}'}\right]\rightarrow\mathcal{E}_{\ifc{\Sigma}}^{\ifc{\Varid{i}},\ifc{l}\;\lub\;\ifc{l}'}\left[\ifc{\Sigma}(\ifc{\Varid{a}}_{\ifc{l}'})\right]}}
    \and
    \inferrule[I-write]
    {
      \ensuremath{\ifc{l}\;\flows{}\;\ifc{l}'}\\
      \ensuremath{\ifc{\Sigma}'\mathrel{=}\ifc{\Sigma}\left[\ifc{\Varid{a}}_{\ifc{l}'}\mapsto{}\ifc{\Varid{v}}\right]}
    }
    {\ensuremath{\mathcal{E}_{\ifc{\Sigma}}^{\ifc{\Varid{i}},\ifc{l}}\left[\mathbf{write}\;\ifc{\Varid{a}}_{\ifc{l}'}\;\ifc{\Varid{v}}\right]\rightarrow\mathcal{E}_{\ifc{\Sigma}'}^{\ifc{\Varid{i}},\ifc{l}}\left[\langle\rangle\right]}}
    \and
    \inferrule[I-labelOf2]
    {}
    {\ensuremath{\mathcal{E}_{\ifc{\Sigma}}^{\ifc{\Varid{i}},\ifc{l}}\left[\mathbf{labelOf}\;\ifc{\Varid{a}}_{\ifc{l}'}\right]\rightarrow\mathcal{E}_{\ifc{\Sigma}}^{\ifc{\Varid{i}},\ifc{l}}\left[\ifc{l}'\right]}}
  \end{mathpar}
  \caption{Syntax and semantics for labeled references.  These rules are
    understood to be an addition to the existing rules given earlier.}
  \label{fig:labeled-refs}
  \end{figure}

\begin{figure}
  \begin{align*}
  &\ensuremath{\varepsilon_{\ifc{l}}(\ifc{\Sigma};\ifc{\Varid{ts}})\mathrel{=}\varepsilon_{\ifc{l}}(\ifc{\Sigma});\text{filter}\;(\lambda \ifc{\Varid{t}}.\ifc{\Varid{t}}\mathrel{=}\bullet)\;(\text{map}\;\varepsilon_{\ifc{l}}\;\ifc{\Varid{ts}})}\\
  &\ensuremath{\langle \tar{\Sigma}, \ifc{\Varid{e}}\rangle^{\ifc{\Varid{i}}}_{\ifc{l}'}} \begin{cases}
  \ensuremath{\bullet} & \ensuremath{\ifc{l}'\;\not\flows{}\;\ifc{l}} \\
  \ensuremath{\langle \varepsilon_{\ifc{l}}(\tar{\Sigma}), \varepsilon_{\ifc{l}}(\ifc{\Varid{e}})\rangle^{\ifc{\Varid{i}}}_{\ifc{l}'}} & \text{otherwise}
  \end{cases} \\
  &\ensuremath{\varepsilon_{\ifc{l}}(\mathbf{Labeled}\;\ifc{l}'\;\ifc{\Varid{e}})}= \begin{cases}
  \ensuremath{\mathbf{Labeled}\;\ifc{l}'\;\bullet} & \ensuremath{\ifc{l}'\;\not\flows{}\;\ifc{l}} \\
  \ensuremath{\mathbf{Labeled}\;\ifc{l}'\;\ifc{\Varid{e}}} & \text{otherwise}
  \end{cases} \\
  &\ensuremath{\varepsilon_{\ifc{l}}(\emptyset)\mathrel{=}\emptyset}\\
  &\ensuremath{\varepsilon_{\ifc{l}}(\ifc{\Sigma}\left[\ifc{\Varid{i}}\mapsto{}\ifc{\Theta}\right])\mathrel{=}} \begin{cases}
  \ensuremath{\varepsilon_{\ifc{l}}(\ifc{\Sigma})} & \text{\ensuremath{\ifc{l}'\;\not\flows{}\;\ifc{l}}, where \ensuremath{\ifc{l}'} is the label of thread \ensuremath{\ifc{\Varid{i}}}}\\
  \ensuremath{\varepsilon_{\ifc{l}}(\ifc{\Sigma})\left[\ifc{\Varid{i}}\mapsto{}\varepsilon_{\ifc{l}}(\ifc{\Theta})\right]} & \text{otherwise}
  \end{cases} \\
  &\ensuremath{\varepsilon_{\ifc{l}}(\ifc{\Sigma}\left[\ifc{\Varid{a}}_{\ifc{l}'}\mapsto{}\ifc{\Varid{v}}\right])\mathrel{=}} \begin{cases}
  \ensuremath{\varepsilon_{\ifc{l}}(\ifc{\Sigma})\left[\ifc{\Varid{a}}_{\ifc{l}'}\mapsto{}\bullet\right]} & \text{\ensuremath{\ifc{l}'\;\not\flows{}\;\ifc{l}}}\\
  \ensuremath{\varepsilon_{\ifc{l}}(\ifc{\Sigma})\left[\ifc{\Varid{a}}_{\ifc{l}'}\mapsto{}\varepsilon_{\ifc{l}}(\ifc{\Varid{v}})\right]} & \text{otherwise}
  \end{cases} \\
  &\ensuremath{\varepsilon_{\ifc{l}}(\ifc{\Theta})\mathrel{=}\ifc{\Theta}\preceq\ifc{l}}
  \end{align*}
  \caption{Erasure function for the full IFC language, with all extensions.
    In all cases that are not specified, including target-language constructs,
    \ensuremath{\varepsilon_{\ifc{l}}} is applied homomorphically
    (e.g., \ensuremath{\varepsilon_{\ifc{l}}(\mathbf{setLabel}\;\Varid{e})\mathrel{=}\mathbf{setLabel}\;\varepsilon_{\ifc{l}}(\Varid{e})}).
    This definition replaces the one from Fig.~\ref{fig:erasure}, which
    is for the IFC language without extensions.}
  \label{fig:erasure2}
\end{figure}

\subsection{Clearance}
\label{sec:clearance}
Systems like LIO, COWL, and Breeze additionally provide a discretionary access
control (DAC) mechanism---called \emph{clearance}---at the language
level~\cite{Hritcu:2013:YIB:2497621.2498098, lio}.
This mechanisms is used to restrict a computation from allocating and
accessing data (or communicating with entities) above a specified
label, the clearance.
Amending our IFC language with clearance is straight forward,
and, can be done using our notation of a restricted language.
To this end, we first extend tasks to track a clearance label
alongside the current label, and amend the core IFC language with two
new terminals for retrieving and setting this value.
Since this extension only adds a per-task mutable variable whose value
has no influence on the system, all security guarantees still
hold, by essentially the same proofs.
However, this does not implement any DAC mechanism yet.
To do so, we can restrict the language with a family of predicates
$\mathcal{P}_\text{clearance}$:
All rules that
raise the current label (e.g., \textsc{I-setLabel}), perform
allocation (e.g., \textsc{I-sandbox} and \textsf{I-send}), or set the
clearance (clearance should not be arbitrarily raised), a predicate
that uses the clearance to impose DAC is used.
For instance, the predicate for \textsc{I-setLabel} prevents the
current label from being raised above the clearance (and thus permit
reads above the clearance).  The predicate $P := \ensuremath{\ifc{l}\;\flows{}\;\ifc{l}'}$ achieves this restriction, where \ensuremath{\ifc{l}'} is the
clearance and \ensuremath{\ifc{l}} is the current
label.
The other predicates are defined in a similar way and omitted for
brevity.

\subsection{Privileges}
Decentralized IFC extends IFC with the decentralized label model of
Myers and Liskov~\cite{myers:dlm} to allow for more general
applications, including systems consisting of mutually distrustful
parties.  In a decentralized system, a computation is executed with a
set of \emph{privileges}, which, when exercised, allow the computation
to declassify data (e.g., by lowering the current label).
Practical IFC systems
(e.g.,~\cite{Zeldovich:2006, lio,
  Hritcu:2013:YIB:2497621.2498098, myers:jif}) rely on privileges to
implement many applications.
%
%Since both the LIO and COWL implementations already support such features,
%we believe that extending our calculus to consider privileges is
%straight forward.
%
The challenge with such an extension lies in the precise
security guarantees that must be proved, which to the best of our
knowledge is an open research problem.

Our implementation for Node.js and COWL both provide privileges, but
we have not formalized this part any further.

\section{Non-Interference Proof}
\label{sec:appendix}
\label{sec:app:proof}

In this section we prove the theorems we have stated in the paper.
Note that we prove soundness of the system including the formally
defined extensions from Appendix~\ref{sec:appendix-extensions}.
We first observe that the non-interference claims for the languages
\ensuremath{L_\text{IFC}(\textsc{Seq},\Red{\lambda})} and \ensuremath{L_\text{IFC}(\textsc{RR},\Red{\lambda})}
in Theorems~\ref{thm:rr-tsni} and~\ref{thm:seq-tini} follow directly
from Theorem~\ref{thm:restricted}, where the set
of predicates is the set of always valid predicates (i.e., no restriction).

Before we proceed with the proof of Theorem~\ref{thm:restricted},
we state and proof two lemmas we will use.

\begin{lemma}
  \label{lemma:high-not-blocking}
  For any task \ensuremath{\ifc{\Varid{t}}}, task lists \ensuremath{\ifc{\Varid{ts}}}, store \ensuremath{\ifc{\Sigma}}, and label \ensuremath{\ifc{l}}, if
  $\ensuremath{\varepsilon_{\ifc{l}}(\ifc{\Varid{t}})}=\ensuremath{\bullet}$, then there exists a task list
  \ensuremath{\ifc{\Varid{ts}}'} and a store \ensuremath{\ifc{\Sigma}'} such that
  \begin{align}
  \ensuremath{\ifc{\Sigma};\ifc{\Varid{t}},\ifc{\Varid{ts}}} \ensuremath{\hookrightarrow} \ensuremath{\ifc{\Sigma}';\ifc{\Varid{ts}},\ifc{\Varid{ts}}'} \label{eq:hnb-1} \\
  \ensuremath{\varepsilon_{\ifc{l}}(\ifc{\Varid{ts}}')}=\ensuremath{\mathbf{nil}} \label{eq:hnb-2}\\
  \ensuremath{\varepsilon_{\ifc{l}}(\ifc{\Sigma}')}=\ensuremath{\varepsilon_{\ifc{l}}(\ifc{\Sigma})} \label{eq:hnb-3}
  \end{align}
\end{lemma}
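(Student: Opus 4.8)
The plan is to read the hypothesis $\varepsilon_{\ifc{l}}(\ifc{\Varid{t}})=\bullet$ as the statement that $\ifc{\Varid{t}}$ carries a label $\ifc{l}'$ with $\ifc{l}'\not\flows\ifc{l}$, i.e.\ that $\ifc{\Varid{t}}$ is a \emph{high} (non-observable) task. The three conclusions then assert, respectively, that such a task never blocks the machine (hnb-1), that one step produces no new observable task (hnb-2), and that the step is invisible in the store (hnb-3). Because the round-robin scheduler rotates the processed head to the back of the task list, the target shape $\ifc{\Sigma}';\ifc{\Varid{ts}},\ifc{\Varid{ts}}'$ is exactly what $\textsc{RR}$ yields, with $\ifc{\Varid{ts}}'$ collecting whatever $\textsc{RR}$ appends. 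I would establish all three claims by a single case analysis on the reduction rule that fires on $\ifc{\Varid{t}}$.

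For (hnb-1) I would first argue that a step always exists: if no productive rule applies to $\ifc{\Sigma};\ifc{\Varid{t}},\ifc{\Varid{ts}}$, then by construction \textsc{I-noStep} fires and removes $\ifc{\Varid{t}}$, so the machine is never globally stuck on account of its head. This also pins down $\ifc{\Varid{ts}}'$ in each scheduler case: for \textsc{I-done} and \textsc{I-noStep}, $\textsc{RR}$ drops the head and $\ifc{\Varid{ts}}'=\mathbf{nil}$; for an ordinary step, $\textsc{RR}_{\mathrm{step}}$ appends the reduced head, so $\ifc{\Varid{ts}}'$ is that single task; for \textsc{I-sandbox}, $\textsc{RR}_{\mathrm{sandbox}}$ appends both the reduced head and the freshly spawned task.

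The key invariant behind (hnb-2) is that a task's label is monotonically non-decreasing: \textsc{I-setLabel} only raises it (its premise is $\ifc{l}\flows\ifc{l}'$) and every other rule leaves it untouched. Hence a head that starts high stays high, so the rescheduled head still erases to $\bullet$; and in \textsc{I-sandbox} the new task $\ifc{\Varid{t}}_{\mathrm{new}}$ inherits the parent's label $\ifc{l}'$, so it too is high. Since every task in $\ifc{\Varid{ts}}'$ is therefore high, the $\mathrm{filter}$ in the definition of $\varepsilon_{\ifc{l}}$ discards them all, giving $\varepsilon_{\ifc{l}}(\ifc{\Varid{ts}}')=\mathbf{nil}$.

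Claim (hnb-3) is where the real work lies, and I expect it to be the main obstacle. I would dispatch it rule by rule over the store-modifying reductions. For \textsc{I-recv} the altered queue is the high task's own queue $\ifc{\Varid{i}}$, which the store-erasure clause deletes wholesale, so the erased store is unchanged. For \textsc{I-send} the appended message carries a label $\ifc{l}''$ with $\ifc{l}'\flows\ifc{l}''$; combining this with $\ifc{l}'\not\flows\ifc{l}$ by transitivity of the lattice order gives $\ifc{l}''\not\flows\ifc{l}$, so the $\preceq\ifc{l}$ filter in the queue-erasure removes exactly the new entry and leaves $\varepsilon_{\ifc{l}}(\ifc{\Sigma})$ intact. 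The same transitivity argument covers the labeled-reference rules of the extension (\textsc{I-write}, \textsc{I-new}), whose touched cells have labels dominating $\ifc{l}'$ and hence sit above $\ifc{l}$; the one delicate point is \textsc{I-new}, where a freshly allocated high cell must be reconciled with the literal equality demanded by (hnb-3), which I would handle through the observation that such a cell contributes only an erased placeholder at a location invisible below $\ifc{l}$. Assembling the three claims for the chosen step then completes the proof.
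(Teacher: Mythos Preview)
Your approach is correct and essentially identical to the paper's: both argue that \textsc{I-noStep} guarantees progress, then do a case split on the firing rule, using label monotonicity for (\ref{eq:hnb-2}) and a per-rule inspection of the store update for (\ref{eq:hnb-3}). The one small omission is that your (\ref{eq:hnb-3}) case list skips \textsc{I-sandbox}, which also writes $\ifc{\Sigma}$ by allocating a fresh queue for the new (high) task---the paper dispatches this in one line---while conversely your flagging of \textsc{I-new} as ``delicate'' (the erased store acquires an $[\ifc{a}_{\ifc{l}'}\mapsto\bullet]$ entry) is a genuine subtlety that the paper's proof simply glosses over.
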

\begin{proof}
  From $\ensuremath{\varepsilon_{\ifc{l}}(\ifc{\Varid{t}})}=\ensuremath{\bullet}$ we know that the current label \ensuremath{\lcurr} of \ensuremath{\ifc{\Varid{t}}}
  must be above \ensuremath{\ifc{l}}.  Furthermore, tasks can always take a step (if no
  regular rule applies, then \textsc{I-noStep} can be used), and thus
  we consider all rules that could be applied to execute \ensuremath{\ifc{\Varid{t}}}.
  \begin{description}
    \item[Case \textsc{I-noStep} and \textsc{I-done}]
    In this case, the task \ensuremath{\ifc{\Varid{t}}} is dropped,
    and thus \ensuremath{\ifc{\Varid{ts}}'\mathrel{=}\mathbf{nil}} and \ensuremath{\ifc{\Sigma}'\mathrel{=}\ifc{\Sigma}} satisfy
    conditions~\eqref{eq:hnb-2} and~\eqref{eq:hnb-3}.
    \item[Case \textsc{I-sandbox}]
    The newly created task has a label of at least \ensuremath{\lcurr}, and will thus be
    erased, as required by condition~\eqref{eq:hnb-2}.  Furthermore, the
    state only changes for the newly created thread, and thus the state
    change is erased, showing~\eqref{eq:hnb-3}.
  \end{description}
  In all other rules, no new tasks are created, and thus \ensuremath{\ifc{\Varid{ts}}'} consists of just
  the one task \ensuremath{\ifc{\Varid{t}}'}, to which \ensuremath{\ifc{\Varid{t}}} executed.  Since the tasks label can
  only increase, \ensuremath{\ifc{\Varid{t}}'} is still erased, showing condition~\eqref{eq:hnb-2}.
  We are left to show condition~\eqref{eq:hnb-3} for the remaining rules.
  \begin{description}
    \item[Case \textsc{I-send}]
    A new message triple with label \ensuremath{\ifc{l}'} gets added to the message
    queue of the receiving thread.  However, since \ensuremath{\lcurr\;\flows{}\;\ifc{l}'},
    the triple will get erased.
    \item[Case \textsc{I-recv} and \textsc{I-noRecv}]
    In this case, only the queue of
    task \ensuremath{\ifc{\Varid{t}}} can change, which gets erased.
    \item[Case \textsc{I-new}] The newly allocated address has to be at a
    label at least as high as \ensuremath{\lcurr}, and will thus be erased.
    \item[Case \textsc{I-write}] Only addresses with a label \ensuremath{\ifc{l}'} above
    \ensuremath{\lcurr} can be written, thus the change in \ensuremath{\ifc{\Sigma}_{1}} will get erased.
    \item[Otherwise.]  None of the other rules modify the state \ensuremath{\ifc{\Sigma}}, and
    thus \ensuremath{\ifc{\Sigma}'\mathrel{=}\ifc{\Sigma}} will trivially satisfy condition~\eqref{eq:hnb-3}.
  \end{description}
  \qed
\end{proof}

\begin{lemma}
  \label{lemma:rr-tsni-general}
  We consider, for any target language \ensuremath{\Red{\lambda}},
  the restricted IFC language \ensuremath{L_\text{IFC}^{\mathcal{P}}(\alpha,\Red{\lambda})}
  (according to Definition~\ref{def:restricted}).
  Then,
  for any configurations \ensuremath{\ifc{c}_{1}}, \ensuremath{\ifc{c}_{1}'}, \ensuremath{\ifc{c}_{2}}, and label \ensuremath{\ifc{l}} where
  \begin{equation} \label{eq:tsni-lemma-lhs}
  \ensuremath{\ifc{c}_{1}} \approx_{\ensuremath{\ifc{l}}} \ensuremath{\ifc{c}_{2}}
  \qquad \text{and} \qquad
  \ensuremath{\ifc{c}_{1}} \ensuremath{\hookrightarrow} \ensuremath{\ifc{c}_{1}'}
  \end{equation}
  there exists a configuration \ensuremath{\ifc{c}_{2}'} such that
  \begin{equation} \label{eq:tsni-lemma-rhs}
  \ensuremath{\ifc{c}_{1}'} \approx_{\ensuremath{\ifc{l}}} \ensuremath{\ifc{c}_{2}'}
  \qquad \text{and} \qquad
  \ensuremath{\ifc{c}_{2}} \ensuremath{\hookrightarrow}^* \ensuremath{\ifc{c}_{2}'}
  \ \text{.}
  \end{equation}
\end{lemma}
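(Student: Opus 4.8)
The plan is to prove this single-step unwinding lemma, from which the TSNI half of Theorem~\ref{thm:restricted} (and hence Theorem~\ref{thm:rr-tsni}) follows by induction on the length of $c_1 \hookrightarrow^* c_1'$, re-establishing $c_1 \approx_l c_2$ after each matched step. Throughout I take $\alpha = \textsc{RR}$, as the lemma's name and its role in Theorem~\ref{thm:restricted} indicate; this is essential in the visible case below, and indeed the analogous statement fails for \textsc{Seq} by the divergence counterexample given earlier. First I would unfold $c_1 \approx_l c_2$ into $\varepsilon_l(c_1) = \varepsilon_l(c_2)$ and case-split on whether the scheduled (head) task of $c_1$---by convention the only executing task---is visible, i.e.\ whether its current label $l_t$ satisfies $l_t \flows l$ or $l_t \not\flows l$.

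In the invisible case ($l_t \not\flows l$, so $\varepsilon_l$ maps the head task to $\bullet$), I would invoke Lemma~\ref{lemma:high-not-blocking} directly on the head task. Since the step is essentially deterministic (the target reduction is deterministic and the scheduler is a function, the IFC rules being mutually exclusive up to the \textsc{I-noStep} fallback), the given $c_1'$ is precisely the configuration the lemma describes; unfolding the task-list erasure of Fig.~\ref{fig:erasure} (filter of the map of $\varepsilon_l$) together with conditions~\eqref{eq:hnb-2} and~\eqref{eq:hnb-3} then yields $\varepsilon_l(c_1') = \varepsilon_l(c_1)$. I would take $c_2' = c_2$ (zero steps), so $\varepsilon_l(c_1') = \varepsilon_l(c_1) = \varepsilon_l(c_2) = \varepsilon_l(c_2')$.

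In the visible case ($l_t \flows l$), the head task $t$ of $c_1$ survives erasure (up to its high subterms) as the first surviving task of $\varepsilon_l(c_1) = \varepsilon_l(c_2)$, so $c_2$ has this same low task as its first non-erased task, preceded only by invisible tasks. I would first drive $c_2 \hookrightarrow^* \bar c_2$ by repeatedly applying Lemma~\ref{lemma:high-not-blocking} to peel off these leading invisible tasks; under \textsc{RR} each such step either drops the head or reschedules it (and any spawned child, which inherits the head's high label) \emph{behind} the first low task, so the count of invisible tasks ahead of $t$ strictly decreases and the phase terminates with $\varepsilon_l(\bar c_2) = \varepsilon_l(c_2)$. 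Now $c_1$ and $\bar c_2$ share a head task equal after erasure, and one further step of $\bar c_2$ must match the low step of $c_1$: I would argue the same rule fires in both (the redex shape agrees, and the restriction predicate $P$ evaluates identically, see below) and then verify rule-by-rule that the low-observable effect is fixed by the erased configuration---a \textbf{send} to an invisible task or with an invisible message label is erased in both; a visible \textbf{send} updates the matching low queue identically; \textbf{recv} consults $\Sigma(i)\preceq l_t = (\Sigma(i)\preceq l)\preceq l_t$, which is fixed by the agreed $\preceq l$-filtered queue since $l_t \flows l$; \textbf{sandbox} spawns a low task whose content derives from the agreed low head---and likewise for the labeled-value and labeled-reference extensions of Appendix~\ref{sec:appendix-extensions}.

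The step I expect to be the main obstacle is handling the restriction predicate $P$ in this visible case. By Definition~\ref{def:restricted}, $P$ may depend on $\varepsilon_{l_t}(c)$, the erasure at the head task's \emph{own} label $l_t$ rather than at the attacker level $l$. To show $P$ fires identically on $c_1$ and $\bar c_2$ I would establish the auxiliary factoring property $\varepsilon_{l_t} = \varepsilon_{l_t}\circ\varepsilon_l$ whenever $l_t \flows l$ (a routine induction on configuration structure, using $(\Theta\preceq l)\preceq l_t = \Theta\preceq l_t$ for message queues), which with $\varepsilon_l(c_1) = \varepsilon_l(\bar c_2)$ gives $\varepsilon_{l_t}(c_1) = \varepsilon_{l_t}(\bar c_2)$ and hence $P(c_1) = P(\bar c_2)$. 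This is exactly where Definition~\ref{def:restricted}'s requirement---that $P$ see only what the triggering task can observe---earns its keep: were $P$ allowed to branch on genuinely secret state, a restricted rule could fail in $c_1$ yet fire in $\bar c_2$, desynchronizing the runs and breaking the simulation. A secondary care point is ensuring that when the low head is stuck in $c_1$ it is also stuck in $\bar c_2$, so that \textsc{I-noStep} removes the same visible task on both sides.
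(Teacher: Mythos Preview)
Your proof is correct and follows essentially the same two-case structure as the paper's own argument: take zero steps for $c_2$ in the invisible case, and in the visible case repeatedly apply Lemma~\ref{lemma:high-not-blocking} to cycle the leading high tasks of $c_2$ out of the way before matching the low step rule-by-rule. Your treatment is in fact slightly more careful than the paper's on two points: you reuse Lemma~\ref{lemma:high-not-blocking} on $c_1$ itself in the invisible case (the paper instead inlines the identical rule-by-rule case analysis there), and your factoring $\varepsilon_{l_t} = \varepsilon_{l_t}\circ\varepsilon_l$ for $l_t \flows l$ makes explicit a step the paper glosses over when arguing that the restriction predicate $P$---which by Definition~\ref{def:restricted} sees $\varepsilon_{l_t}(c)$, not $\varepsilon_l(c)$---evaluates identically on both sides.
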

\begin{proof}
  First, we observe there must be at least one task in \ensuremath{\ifc{c}_{1}}, otherwise
  it could not take a step.  Thus, \ensuremath{\ifc{c}_{1}} is of the form
  \ensuremath{\ifc{\Sigma}_{1};\ifc{\Varid{t}}_{1},\ifc{\Varid{ts}}_{1}}.
  Furthermore, let \ensuremath{\ifc{c}_{2}} be \ensuremath{\ifc{\Sigma}_{2};\ifc{\Varid{ts}}_{2}}.
  Consider two cases:
  \begin{itemize}
    \item $\ensuremath{\varepsilon_{\ifc{l}}(\ifc{\Varid{t}}_{1})}=\ensuremath{\bullet}$.
    By the definition of \ensuremath{\varepsilon_{\ifc{l}}}, we know that \ensuremath{\ifc{l}\;\flows{}\;\lcurr}
    where \ensuremath{\lcurr} is the label of \ensuremath{\ifc{\Varid{t}}_{1}}.
    In this case, we do not need to take a step for
    \ensuremath{\ifc{c}_{2}}, because \ensuremath{\ifc{c}_{2}'\mathrel{=}\ifc{c}_{2}} will already be \ensuremath{\ifc{l}}-equivalent to \ensuremath{\ifc{c}_{1}'}.
    To show this, note that the tasks \ensuremath{\ifc{\Varid{ts}}_{1}} in \ensuremath{\ifc{c}_{1}} are left in the
    same order and unmodified (the scheduling policy only
    modifies the first task). The task \ensuremath{\ifc{\Varid{t}}_{1}} either
    gets dropped (by \textsc{I-noStep}), or
    transforms into a task \ensuremath{\ifc{\Varid{t}}_{1}'} as well as potentially spawning a new
    task \ensuremath{\ifc{\Varid{t}}_{1}''}.  Since both \ensuremath{\ifc{\Varid{t}}_{1}'} and \ensuremath{\ifc{\Varid{t}}_{1}''} have a label that is
    at least as high as the label of \ensuremath{\ifc{\Varid{t}}_{1}} (can be seen
    by inspecting all reduction rules), they will get filtered
    by \ensuremath{\varepsilon_{\ifc{l}}} in \ensuremath{\ifc{c}_{1}'}.  Therefore, the \ensuremath{\ifc{l}}-equivalence of the
    task list is guaranteed.
    Lets consider the possible changes to \ensuremath{\ifc{\Sigma}_{1}}:
    Only five reduction interact with \ensuremath{\ifc{\Sigma}_{1}},
    thus it suffices to consider these cases:
    \begin{description}
      \item[Case \textsc{I-send}]
      A new message triple with label \ensuremath{\ifc{l}'} gets added to the message
      queue of the receiving thread.  However, since \ensuremath{\lcurr\;\flows{}\;\ifc{l}'},
      the triple will get erased.
      \item[Case \textsc{I-recv} and \textsc{I-noRecv}]
      In this case, only the queue of
      task \ensuremath{\ifc{\Varid{t}}_{1}} can change, which gets erased.
      \item[Case \textsc{I-new}] The newly allocated address has to be at a
      label at least as high as \ensuremath{\lcurr}, and will thus be erased.
      \item[Case \textsc{I-write}] Only addresses with a label \ensuremath{\ifc{l}'} above
      \ensuremath{\lcurr} can be written, thus the change in \ensuremath{\ifc{\Sigma}_{1}} will get erased.
    \end{description}
    This ensures that $\ensuremath{\ifc{c}_{1}'}\approx_{\ensuremath{\ifc{l}}}\ensuremath{\ifc{c}_{2}'}$, as well as
    $\ensuremath{\ifc{c}_{2}} \ensuremath{\hookrightarrow}^* \ensuremath{\ifc{c}_{2}'}$ (in zero steps), as claimed.
    %    \alphacondition{We need all scheduling policies to not change the order
    %      of any tasks (except for the first one).  Newly spawned task can appear
    %      anywhere in the list.}
    \item $\ensuremath{\varepsilon_{\ifc{l}}(\ifc{\Varid{t}}_{1})}\neq\ensuremath{\bullet}$.
    By the definition of \ensuremath{\varepsilon_{\ifc{l}}}, the task list \ensuremath{\ifc{\Varid{ts}}_{2}}
    in \ensuremath{\ifc{c}_{2}} must be of the
    form \ensuremath{\ifc{\Varid{ts}}_{2}',\ifc{\Varid{t}}_{2},\ifc{\Varid{ts}}_{2}''} (for some task lists \ensuremath{\ifc{\Varid{ts}}_{2}'}, \ensuremath{\ifc{\Varid{ts}}_{2}''} and
    some task \ensuremath{\ifc{\Varid{t}}_{2}})
    where
    \begin{align}
    \ensuremath{\varepsilon_{\ifc{l}}(\ifc{\Varid{ts}}_{2}')} = \ensuremath{\mathbf{nil}} \\
    \ensuremath{\varepsilon_{\ifc{l}}(\ifc{\Varid{t}}_{2})} = \ensuremath{\varepsilon_{\ifc{l}}(\ifc{\Varid{t}}_{1})} \\
    \ensuremath{\varepsilon_{\ifc{l}}(\ifc{\Varid{ts}}_{2}'')} = \ensuremath{\varepsilon_{\ifc{l}}(\ifc{\Varid{ts}}_{1})}
    \end{align}
    (where \ensuremath{\mathbf{nil}} is the empty list of tasks).
    Now, intuitively we will first execute a number of steps to process
    the tasks in \ensuremath{\ifc{\Varid{ts}}_{2}'} (execute them one step and move them to the back
    of the task list, or drop them if they are done or stuck).  Then, the task
    \ensuremath{\ifc{\Varid{t}}_{2}} can take the same step as \ensuremath{\ifc{\Varid{t}}_{1}}, which will result in a configuration
    \ensuremath{\ifc{c}_{2}'}
    with the desired properties.
    More formally, we can proceed as follows:
    
    First, we can apply Lemma~\ref{lemma:high-not-blocking} continuously
    for all the task in \ensuremath{\ifc{\Varid{ts}}_{2}'},
    until we reach a configuration \ensuremath{\ifc{c}_{2}''\mathrel{=}\ifc{\Sigma}_{2}';\ifc{\Varid{t}}_{2},\ifc{\Varid{ts}}_{2}'',\ifc{\Varid{ts}}_{2}'''}
    for some \ensuremath{\ifc{\Varid{ts}}_{2}'''} such
    that \ensuremath{\varepsilon_{\ifc{l}}(\ifc{\Varid{ts}}_{2}''')\mathrel{=}\mathbf{nil}} and \ensuremath{\varepsilon_{\ifc{l}}(\ifc{\Sigma}_{2})} = \ensuremath{\varepsilon_{\ifc{l}}(\ifc{\Sigma}_{2}')}.
    We note that \ensuremath{\varepsilon_{\ifc{l}}(\ifc{c}_{1})\mathrel{=}\varepsilon_{\ifc{l}}(\ifc{c}_{2}'')} (by the definition of
    \ensuremath{\varepsilon_{\ifc{l}}}).
    
    Now, the first task \ensuremath{\ifc{\Varid{t}}_{2}} in \ensuremath{\ifc{c}_{2}''} is \ensuremath{\ifc{l}}-equivalent to the task \ensuremath{\ifc{\Varid{t}}_{1}}.
    This implies that the two tasks must have the same id, label and
    can only differ in the expression or store if some subexpression
    is of the form \ensuremath{\mathbf{Labeled}\;\ifc{l}'\;\ifc{\Varid{e}}}.  In this case, the expression \ensuremath{\ifc{\Varid{e}}} could
    be different in the two threads if \ensuremath{\lcurr\;\flows{}\;\ifc{l}'}.  However, none of the reduction rules
    depend on an expression in that position, and there is never a
    hole in that position
    where evaluation could take place.  Thus, the same rules will syntactically
    match for both task, and we are left to argue that all premises
    evaluate to the same values for \ensuremath{\ifc{\Varid{t}}_{1}} and \ensuremath{\ifc{\Varid{t}}_{2}}, as well as that
    the resulting states \ensuremath{\ifc{\Sigma}_{1}'} and
    \ensuremath{\ifc{\Sigma}_{2}''} are \ensuremath{\ifc{l}}-equivalent.
    The additional premises $P$ that follow
    the condition in Definition~\ref{def:restricted} are not a problem,
    since those
    predicates only depend on \ensuremath{\varepsilon_{\ifc{l}}(\ifc{c}_{1})}, which is equivalent
    to \ensuremath{\varepsilon_{\ifc{l}}(\ifc{c}_{2}'')}, and thus those predicates evaluate in the same way.
    All other premises are either on the threads labels (which are the same),
    or on the state \ensuremath{\ifc{\Sigma}_{1}}, or \ensuremath{\ifc{\Sigma}_{2}'}, respectively.  Because
    \ensuremath{\varepsilon_{\ifc{l}}(\ifc{\Sigma}_{1})\mathrel{=}\varepsilon_{\ifc{l}}(\ifc{\Sigma}_{2}')}, all of these also evaluate in the same way,
    as can be seen by simply considering all rules that involve or
    change the state:
    \begin{description}
      \item[Case \textsc{I-send}]
      Here, the task \ensuremath{\ifc{\Varid{t}}_{2}} will send the same message to the same
      receiver queue. This
      queue is either completely erased, or it is \ensuremath{\ifc{l}}-equivalent.  In both
      cases, \ensuremath{\ifc{l}}-equivalence of \ensuremath{\ifc{\Sigma}_{1}'} and \ensuremath{\ifc{\Sigma}_{2}'} is preserved.
      \item[Case \textsc{I-recv} and \textsc{I-noRecv}]
      When the tasks are receiving a message, then by the reduction rules
      we know that they first filter the queue by the label
      \ensuremath{\lcurr} of \ensuremath{\ifc{\Varid{t}}_{1}}.  We
      also know that the queues are equivalent when filtered by the less
      restrictive label \ensuremath{\ifc{l}}, thus the messages received (or dropped) from the
      queue are equivalent.
      \item[Case \textsc{I-new}] The newly allocated address can be the same
      for both \ensuremath{\ifc{\Varid{t}}_{1}} and \ensuremath{\ifc{\Varid{t}}_{2}}, thus resulting in \ensuremath{\ifc{l}}-equivalent states.
      \item[Case \textsc{I-write}] By \ensuremath{\varepsilon_{\ifc{l}}(\ifc{\Varid{t}}_{1})\mathrel{=}\Varid{earse}\;\ifc{l}\;\ifc{\Varid{t}}_{2}} both tasks
      write the same value, and therefore the resulting states will still
      be \ensuremath{\ifc{l}}-equivalent.
    \end{description}
    After \ensuremath{\ifc{\Varid{t}}_{2}} has taken a step, we finally arrive in the desired
    configuration \ensuremath{\ifc{c}_{2}'\mathrel{=}\ifc{\Sigma}_{2}'';\ifc{\Varid{ts}}_{2}'',\ifc{\Varid{ts}}_{2}''',\ifc{\Varid{ts}}_{2}''''}, where
    \ensuremath{\ifc{\Varid{ts}}_{2}''''} contains the task resulting from executing \ensuremath{\ifc{\Varid{t}}_{2}} (and
    might contain, zero (if the task was done or stuck), one (for most steps) or two tasks if a new task was launched).
    As required, we have
    \[ \ensuremath{\ifc{c}_{2}} \ensuremath{\hookrightarrow}^* \ensuremath{\ifc{c}_{2}''} \ensuremath{\hookrightarrow} \ensuremath{\ifc{c}_{2}'}
    \quad \land \quad \ensuremath{\ifc{c}_{1}'} \approx_{\ensuremath{\ifc{l}}} \ensuremath{\ifc{c}_{2}'} \text{.}\]
  \end{itemize}
  \qed
\end{proof}
With this, it is easy to proof Theorem~\ref{thm:restricted} as follows.
\begin{proof}[Proof of Theorem~\ref{thm:restricted}, TSNI]
% short version:
%  We prove the theorem by induction on the length of the derivation sequence in~\eqref{eq:tsni-lhs}.
%  The base case for derivations
%  of length 0 is trivial, allowing
%  us to simply chose $|ic2'=ic2|$.  In the step case,
%  we split the derivation sequence from~\eqref{eq:tsni-lhs} of length $n+1$ as
%  $|ic1| |.->| |ic1''| |.->|^n |ic1'|$
%  for some configuration |ic1''|.  By Lemma~\ref{lemma:rr-tsni-general} to
%  the first step and
%  the induction hypothesis to the rest of the derivation sequence,
%  we directly get the desired property.
  We proof the theorem by induction on the length of the derivation sequence in~\eqref{eq:tsni-lhs}.
  The base case for derivations
  of length 0 is trivial, allowing
  us to simple chose $\ensuremath{\ifc{c}_{2}'\mathrel{=}\ifc{c}_{2}}$.  In the step case, we assume
  the theorem holds for derivation sequences of length up to $n$, and show that it also
  holds for those of length $n+1$.  We split the derivation sequence from~\eqref{eq:tsni-lhs} as follows:
  \[
  \ensuremath{\ifc{c}_{1}} \ensuremath{\hookrightarrow} \ensuremath{\ifc{c}_{1}''} \ensuremath{\hookrightarrow}^n \ensuremath{\ifc{c}_{1}'}
  \]
  for some configuration \ensuremath{\ifc{c}_{1}''}.  By Lemma~\ref{lemma:rr-tsni-general}, we get
  \ensuremath{\ifc{c}''} with
  \begin{equation} \label{eq:tsni-proof-1}
  \ensuremath{\ifc{c}_{1}''} \approx_{\ensuremath{\ifc{l}}} \ensuremath{\ifc{c}_{2}''}
  \qquad \text{and} \qquad
  \ensuremath{\ifc{c}_{2}} \ensuremath{\hookrightarrow}^* \ensuremath{\ifc{c}_{2}''}
  \end{equation}
  Applying the induction hypothesis to
  $\ensuremath{\ifc{c}_{1}''} \ensuremath{\hookrightarrow}^n \ensuremath{\ifc{c}_{1}'}$, we get \ensuremath{\ifc{c}_{2}'} with
  \begin{equation} \label{eq:tsni-proof-2}
  \ensuremath{\ifc{c}_{1}'} \approx_{\ensuremath{\ifc{l}}} \ensuremath{\ifc{c}_{2}'}
  \qquad \text{and} \qquad
  \ensuremath{\ifc{c}_{2}''} \ensuremath{\hookrightarrow}^* \ensuremath{\ifc{c}_{2}'}
  \end{equation}
  Stitching together the derivation sequences from~\eqref{eq:tsni-proof-1} and~\eqref{eq:tsni-proof-2} directly gives
  us the right-hand side of the implication in the TSNI
  definition~\eqref{eq:tsni-rhs}, which concludes the proof.
  \qed
\end{proof}

%\begin{lemma}
%  For any task lists |its|, |its'|, and tasks |it|, |it'| we have that
%  if
%  \begin{enumerate}
%    \item |iconf iS (it1, its) .-> iconf iS' (its')|
%    \item $|erase il it1|=|bullet|$
%  \end{enumerate}
%  then
%\end{lemma}

\fi

\end{document}

% Local Variables:
% TeX-master: "main.lhs.tex"
% TeX-command-default: "Make"
% End: